\newtheorem{theorem}{Theorem}[section]
\newtheorem{lemma}[theorem]{Lemma}
\newtheorem{proposition}[theorem]{Proposition}
\newtheorem{remark}[theorem]{Remark}
\newtheorem{fact}[theorem]{Fact}
\newtheorem{example}[theorem]{Example}
\newtheorem{definition}[theorem]{Definition}
\tikzset{
	treenode/.style = {align=center, text centered,
		font=\sffamily},
	siybey/.style = {treenode, circle, font=\sffamily\bfseries, draw=black},
}
\newcommand{\prob}{\mathrm{Prob}}
\begin{document}
	\pagestyle{myheadings}
	\title{Bounds on Guessing Numbers and Secret Sharing \\ Combining Information Theory Methods}
	\author{Emirhan G\"urpınar}
	\contact{Em\.irhan}{G\"urpınar}{}{emirhangurpinar@mailfence.com}
	\markboth{E. GURPINAR} {Bounds on Guessing Numbers and Secret Sharing}
	\maketitle
	\begin{abstract}
		This paper is on developing some computer-assisted proof methods involving non-classical inequalities for Shannon entropy.
		
		Two areas of the applications of information inequalities are studied: Secret sharing schemes and hat guessing games. In the former a random secret value is transformed into shares distributed among several participants in such a way that only the qualified groups of participants can recover the secret value. In the latter each participant is assigned a hat colour and they try to guess theirs while seeing only some of the others'. The aim is to maximize the probability that every player guesses correctly, the optimal probability depends on the underlying sight graph. We use for both problems the method of non-Shannon-type information inequalities going back to Z. Zhang and R. W. Yeung. We employ the linear programming technique that allows to apply new information inequalities indirectly, without even writing them down explicitly. To reduce the complexity of the problems of linear programming involved in the bounds we extensively use symmetry considerations. Using these tools, we improve lower bounds on the ratio of key size to secret size for the former problem and an upper bound for one of the ten vertex graphs related to an open question by Riis for the latter problem.
	\end{abstract}	
	
	\keywords{Shannon entropy, non-Shannon-type information inequalities, secret sharing, linear programming, symmetries, copy lemma, entropy region, guessing games, network coding, multiple unicast, information theory, Shannon inequalities}
	
	\classification{94A05, 94A15, 94A17, 94A62}
	
	\section{INTRODUCTION}
	
	The aim of this paper is to show how the techniques of computer-assisted proofs for information inequalities (for the Shannon entropy) can be used in various applications.
	Each ingredient of our approach has already been known, but we argue that a properly chosen combination of these methods is quite powerful, to the point that we can improve several previously known bounds.
	We apply the techniques to two targets: we prove lower bounds for the efficiency of \emph{secret sharing schemes} (for several specific access structures) and lower bounds for \emph{hat guessing games}.
	Our new bounds are proven using heavy computations and it seems that the same results would be very hard to achieve manually, without a computer. Our main goal is to show the efficiency of the right combination of technical tools; this is why we have deliberately chosen problems (in secret sharing and in hat-guessing games) that were already studied earlier by other researchers, so that we can compare our results with previously known bounds.
	
	We go on with a brief review of the fields of secret sharing and hat guessing games in which we apply our techniques.
	
	The notion of secret sharing introduced by Shamir \cite{shamir1979share} and Blakley \cite{blakley1979safeguarding}, is nowadays pretty standard in cryptography. In what follows the motivation and the basic definition of secret sharing is recalled. Imagine that we want to share a secret between some participants in such a way that	
	\begin{itemize}
		\item some coalitions (subsets of participants), the \emph{authorized} ones, can reconstruct the secret combining their shares;
		\item still the other coalitions, the \emph{forbidden} ones, get no information about the secret.
	\end{itemize}
	One can easily imagine practical situations when this ability is useful, and Shamir's famous secret sharing scheme deals with the case when all sufficiently large groups (at least $t$ participants for some threshold $t$) are authorized while small groups with less than $t$ participants are forbidden. Given the description of the authorized and forbidden coalitions for a set of participants, we want to find how small the maximal share size can be made, with respect to the size of the secret. In general this is an open problem, thus we look for lower bounds on this quantity as frequently done in the literature. Such questions are of interest in their own right. Also, they can be used as benchmarks for the techniques based on information inequalities.
	
	In this article we study several particular access structures. We improve some previously known lower bounds for their information ratios. The previously known bounds were obtained using the Ahlswede--K\"orner (AK) lemma in \cite{farras2020improving}. We use a different technique -- the general version of the \emph{copy lemma} \cite{zhang1998characterization} combined with symmetry considerations. For every given access structure, we reduce the question of the information ratio to a linear problem (of very large size) and then use linear programming solvers to obtain a lower bound for this information ratio. Our results on secret sharing are summarized in Theorem~\ref{thm:spanish-bounds} and compared with the previously known bounds in Table~\ref{table-sss-results}.
	
	The \emph{hat guessing games} is a family of recreational mathematics problems \cite{winkler2002games}, some variants of which are known to be connected to coding theory \cite{robinson2001mathematicians}. Each player gets a hat of some colour (invisible to them) and has to guess this colour (knowing only the colours of hats they see). There are many versions of these games \cite{butler2009hat,ma2011new}. We consider a version introduced by Riis in \cite{riis2006utilising,riis2007information} as it is connected with some problems in network information theory. In this version the visibility (who can see whose hats) is determined by a graph named the sight graph. The challenge is to maximize the probability that each player guesses their hat colour correctly. No communication between players is allowed during the game, but a strategy can be agreed upon before the game. 
	
	This problem for an arbitrary graph remains open. Moreover there exists a specific graph with 10 vertices for which the question is open (the single smallest such undirected graph). We improve the upper bound on the probability of `correct guessing' for this graph. Our main result here is given in Theorem~\ref{th:R-} and compared with the previously known bound at the beginning of Section~\ref{Ch:HGG}.
	
	To bound the quantities that arise in both of these problems (secret sharing and hat guessing games), we use a combination of several techniques. To prove the bounds, we use non-classical inequalities for entropy (non-Shannon-type inequalities). We derive the necessary inequalities with the \emph{copy lemma}. More precisely, we use these new inequalities indirectly, without writing them explicitly. To this end, we combine the copy lemma with the linear programming approach. To decrease the complexity of the linear program and improve the results we use symmetry considerations (the symmetries of the authorized and forbidden coalitions for problems of secret sharing and the symmetries of the sight graphs for the hat guessing games). Each of these techniques has already been known. However their combination proves to be so efficient that we improve some known bounds for these problems. Our improved bounds are given in the Results sections (in Section~\ref{SSS:R} and Section~\ref{s:our-res}).
	
	In the next section we give preliminaries and explain the context in more detail. In Section~\ref{symmetry-section} we explain the symmetries and compute the symmetry groups of our problems. In the following sections we formulate the particular problems on which we apply our method.
	
	\section{Preliminaries}\label{ch:preliminaries}
	
	\subsection{Entropy Region}
	
	\begin{definition}[Entropy Vector]
		Let $X=(X_i)_{i\in\llbracket1,n\rrbracket}$ be a sequence of jointly distributed random variables with a finite range. We denote by $h_X$ the vector, the coordinates of which are the values of Shannon entropy for all sub-tuples of $X$. This vector is called the \emph{entropy vector} (also known as \emph{the entropy profile}) of $X$. Note that it consists of $2^n-1$ real components $h_I=H((X_i)_{i\in I})$ for each ${\varnothing\neq I\subseteq\llbracket1,n\rrbracket}$, so it is in $\mathbb{R}^{2^n-1}$.
	\end{definition}
	
	\begin{definition}[Entropy Region]
		For $n>0$, the set of all entropy vectors of dimension $2^n-1$ (for every distribution of $n$-tuples of random variables) is called the \emph{entropy region}. Following \cite{zhang1998characterization}, we use the notation $\Gamma_n^*\subset\mathbb{R}^{2^n-1}$ for it.
	\end{definition}
	
	\begin{definition}[Almost Entropic]
		The closure of $\Gamma_n^*$ is noted $\overline{\Gamma_n^*}$, and its elements are called \emph{almost entropic} vectors. Any non-strict inequality that is satisfied by all elements of $\Gamma_n^*$ is also satisfied by all elements of $\overline{\Gamma_n^*}$ by limit.
	\end{definition}
	
	\begin{remark}\label{rem:convex-cone}
		$\overline{\Gamma_n^*}$ is a convex cone \cite{zhang1997non}. In particular it is invariant under multiplication by a non-negative scalar in $\mathbb{R}_+$.
	\end{remark}
	
	So $\overline{\Gamma_n^*}$ is defined solely by the linear inequalities satisfied by $\Gamma_n^*$.
	
	The characterization of $\Gamma_n^*$ and that of its closure are open problems.
	
	\subsection{Information Inequalities For Entropy}\label{s:inf-ineq}
	
	\begin{definition}[Information Inequality]
		An \emph{information inequality} for entropy is a linear inequality for the entropy quantities of jointly distributed random variables with real coefficients.
	\end{definition}
	By definition of $\Gamma_n^*$, the \emph{information inequalities} for $n$ random variables are exactly the linear inequalities for $2^n-1$ coordinates that are true for all vectors in $\Gamma_n^*$.
	
	The first universally true information inequalities were given in the seminal paper \cite{shannon1948mathematical} by Shannon.
	
	\begin{definition}[Shannon and Shannon-type inequalities]\label{def:Shannon-ineq}
		Let us denote $(X_i)_{i\in I}$ by $X_I$ in short. The inequalities of the form $$I(X_I:X_J|X_K)\ge0$$ are called \emph{Shannon inequalities}. They can be expanded as
		$$H(X_{I\cup K}) + H(X_{J\cup K}) \ge H(X_{I\cup J\cup K}) + H(X_{K}).$$
		The inequalities that are linear combinations with positive coefficients of Shannon inequalities are called \emph{Shannon-type (classical) inequalities}. 
	\end{definition}
	
	The set of vectors with $2^n-1$ coordinates (not necessarily entropic) satisfying all classical inequalities is noted $\Gamma_n$.
	
	Note that $\Gamma_n^*\subset\overline{\Gamma_n^*}\subset\Gamma_n$, that $\Gamma_n^*$ is closed under addition and that $\overline{\Gamma_n^*}$ is a convex cone \cite{zhang1997non}.
	
	\begin{definition}[Elemental Information Inequalities]
		Let $X_1,\ldots X_n$ be random variables. The inequalities of the form $$I(X_i:X_j|X_K)\ge0$$ where $i\neq j$ and $K\subseteq\llbracket1,n\rrbracket\setminus\{i,j\}$ or $i=j$ and $K=\llbracket1,n\rrbracket\setminus\{i\}$ are called \emph{elemental information inequalities} or shortly \emph{elemental inequalities}.
	\end{definition}
	\begin{fact}\label{prop:elem}
		Elemental inequalities for $n$ variables imply all Shannon inequalities for $n$ variables by linear combinations with positive coefficients, see \cite[Chapter~13]{yeung2002first}.
	\end{fact}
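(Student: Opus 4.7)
The strategy is to express any Shannon inequality $I(X_I : X_J \mid X_K) \ge 0$ as a nonnegative integer combination of elemental inequalities, by iterating the chain rule for conditional mutual information and splitting off conditional entropies when arguments overlap. First I would reduce to the case with $I$ and $J$ disjoint from $K$: this is free because $I(X_I : X_J \mid X_K) = I(X_{I \setminus K} : X_{J \setminus K} \mid X_K)$, since a variable already present in the conditioning can be dropped from either argument without changing the value. Then, if $I \cap J$ is still nonempty, write $M = I \cap J$, $I' = I \setminus J$ and decompose
\[
I(X_I : X_J \mid X_K) \;=\; I(X_{I'} : X_J \mid X_K) \;+\; \sum_{m \in M} H\bigl(X_m \bigm| X_K, X_{I'}, X_{M_{<m}}\bigr),
\]
which is a positive combination of a Shannon inequality with disjoint arguments and conditional entropies of single variables.

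In the pairwise disjoint case, fix an ordering $I = \{i_1, \ldots, i_p\}$, $J = \{j_1, \ldots, j_q\}$ and apply the chain rule in each coordinate to obtain
\[
I(X_I : X_J \mid X_K) \;=\; \sum_{r=1}^{p}\sum_{s=1}^{q} I\bigl(X_{i_r} : X_{j_s} \bigm| X_K, X_{i_1, \ldots, i_{r-1}}, X_{j_1, \ldots, j_{s-1}}\bigr).
\]
Every summand has $i_r \ne j_s$ and conditioning set disjoint from $\{i_r, j_s\}$, so it is already an elemental inequality with coefficient $+1$. For the conditional entropy terms $H(X_m \mid X_L)$ produced by the overlap reduction (with $L \subsetneq \llbracket 1, n \rrbracket \setminus \{m\}$) I would use the identity
\[
H(X_m \mid X_L) \;=\; H\bigl(X_m \bigm| X_{\llbracket 1, n \rrbracket \setminus \{m\}}\bigr) \;+\; I\bigl(X_m : X_{\llbracket 1, n \rrbracket \setminus (L \cup \{m\})} \bigm| X_L\bigr),
\]
whose first term is an elemental conditional entropy inequality and whose second term is a Shannon inequality in disjoint form, hence reducible to elemental mutual informations by the preceding chain-rule expansion.

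Concatenating the three steps writes every Shannon inequality as a positive integer combination of elemental inequalities, as claimed. I expect the only real difficulty to be the bookkeeping: one must verify, at each recursive step, that the current conditioning set lies inside $\llbracket 1, n \rrbracket$ and is disjoint from the pair of indices of its mutual information term, so that the summand indeed matches the elemental template. This is mechanical once the orderings on $I$ and $J$ are fixed, and is precisely the routine verification made explicit in \cite[Chapter~13]{yeung2002first}.
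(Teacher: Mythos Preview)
Your argument is correct and is essentially the standard proof from Yeung's book; the paper itself does not supply a proof of this Fact but merely cites \cite[Chapter~13]{yeung2002first}, so there is nothing to compare against beyond noting that you have reproduced the expected chain-rule decomposition. One very minor quibble: in step~4 you write ``with $L \subsetneq \llbracket 1, n \rrbracket \setminus \{m\}$'', but the case $L = \llbracket 1, n \rrbracket \setminus \{m\}$ can also occur and is then already elemental, so no further reduction is needed there.
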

	\begin{remark}\label{rem:number-of-inequalities}
		There are $\Theta(n^2\cdot 2^n)$ elemental inequalities (precisely $\binom n22^{n-2}+n$) for $n$ variables,
		whereas there are $\Theta(5^n)$ different Shannon inequalities.
	\end{remark}
	
	\subsection{How To Prove An Information Inequality}
	
	Consider a Shannon-type inequality, i.e. a linear combination of Shannon inequalities. Here the situation is simple: for a given number $m$ of random variables (or strings) we have $2^m-1$ entropic quantities and can write down all the Shannon inequalities for these quantities. Then we want to know whether the inequality in question is a non-negative linear combination of Shannon inequalities. This is a classical question of linear programming, a linear program solver finds out whether it is true or not. One should have in mind, however, that the dimension of the linear program grows exponentially in $m$, so the linear program could be quite large. Only very simple cases could be treated by hand, and quite soon we bump into a system of linear inequalities that is too large even for computer tools. To make it smaller, we can use dependencies between Shannon inequalities for different tuples and omit some inequalities that can be derived from the elemental ones.
	
	Let $I_i(X_1,\ldots X_m)$ be linear combinations of entropies of subsets of $\{X_1,\ldots X_m\}$ for $i\in\mathcal{I}$. Statements of the form
	$$\wedge_{i\in\mathcal{I}}I_i(X_1,\ldots X_m)\ge0\implies I(X_1,\ldots X_m)\ge0.$$
	are called conditional or constraint inequalities. They often appear in applications. The same technique of linear programming can be applied by adding the conditions 
	$$I_i(X_1,\ldots X_m)\ge0,i\in\mathcal{I}$$
	to the linear program.
	
	In this way we can derive the Shannon-type inequalities starting from Shannon inequalities. Of course, we may as well add some known non-Shannon-type inequalities to the list of inequalities that we combine.
	
	To get and prove non-Shannon-type inequalities, the most common tool is the \emph{copy lemma} that we are going to formulate now. Let us split all the random variables $X_1,\ldots X_m$ into two groups $A_1,\ldots A_k$ and $B_1,\ldots B_\ell$ (in an arbitrary way). We can assume that $A_1,\ldots A_k$ are sampled first according to their marginal distribution and then $B_1,\ldots B_\ell$ are sampled according to their conditional distribution. In this way we get the same distribution $A_1,\ldots A_k, B_1,\ldots B_\ell$, so nothing new is obtained yet. But we can, for the same values of $A_1,\ldots A_k$, consider another independent sample $B_1',\ldots B_\ell'$ using the same conditional distribution. Then, instead of $k+\ell$ variables $A_1,\ldots A_k,B_1,\ldots B_\ell$ that we started with, we get a joint distribution for $k+2\ell$ variables
	\[
	A_1,\ldots A_k,\ B_1,\ldots B_\ell,\ B_1',\ldots B_\ell'
	\]
	that have the following properties:
	\begin{itemize}
		\item 
		the distribution of $A_1,\ldots A_k,B_1,\ldots B_\ell$ is the same as before;
		\item
		the distribution of $A_1,\ldots A_k,B_1',\ldots B_\ell'$ is the same as for $A_1,\ldots A_k,B_1,\ldots B_\ell$;
		\item
		the tuples $B_1,\ldots B_\ell$ and $B_1',\ldots B_\ell'$ are independent given $A_1,\ldots A_k$.
	\end{itemize}
	
	Formally we state the copy lemma in the following two forms. $X$ below corresponds to $A_1,\ldots A_k$ above. $Y$ and $Z$ correspond to a partition of $B_1,\ldots,B_\ell$ into the variables we sample only at first step and the variables we resample at the second step respectively. This distinction is useful for not increasing the number of total variables of the linear program we use.
	
	\begin{lemma}[Copy Lemma \cite{zhang1998characterization,dougherty2011non}]
		Let $X,Y,Z$ be three jointly distributed random vectors.
		\begin{enumerate}
			\item There exists a random vector $Z'$ such that
			\begin{itemize}
				\item $X,Z$ and $X,Z'$ are identically distributed,
				\item $Z'$ and $Y,Z$ are independent given $X$.
			\end{itemize}
			\item There exists a random vector $Z'$ such that
			\begin{itemize}
				\item every sub-vector of $X,Z$ has the same entropy as the sub-vector of $X,Z'$ that consist of the same coordinates,
				\item $H(Z':Y,Z|X)=0$.
			\end{itemize}
		\end{enumerate}
	\end{lemma}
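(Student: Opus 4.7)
The plan is to prove Part~1 by an explicit measure-theoretic construction of the joint distribution of $(X, Y, Z, Z')$, and then to derive Part~2 as an immediate weakening. The guiding intuition is the one already laid out in the paragraph preceding the statement: after sampling $(X, Y, Z)$ once, resample a fresh copy $Z'$ of $Z$ using the conditional distribution $P_{Z\mid X}$, independently of everything else observed so far.

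More concretely, I would first define a joint distribution on $(X, Y, Z, Z')$ by setting
\[
P(X = x,\, Y = y,\, Z = z,\, Z' = z') := P(X = x,\, Y = y,\, Z = z)\cdot P(Z = z' \mid X = x),
\]
where the conditional probability can be given an arbitrary value (say $0$) whenever $P(X = x) = 0$; this convention does not affect any marginal. A short calculation confirms this is a probability distribution, since for each fixed $x$ the factor $P(Z = z' \mid X = x)$ sums to $1$ over $z'$ and the remaining factor sums to $P(X = x)$.

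Next I would verify the two bullets of Part~1. Marginalising over $y$ and $z'$ gives $P(X = x,\, Z = z)$, while marginalising over $y$ and $z$ gives $P(X = x)\cdot P(Z = z' \mid X = x) = P(X = x,\, Z = z')$, so $(X, Z)$ and $(X, Z')$ have the same joint law. For the conditional independence, the very definition factorises as
\[
P(Y = y,\, Z = z,\, Z' = z' \mid X = x) = P(Y = y,\, Z = z \mid X = x)\cdot P(Z' = z' \mid X = x),
\]
which is exactly $I(Z' : Y, Z \mid X) = 0$. Part~2 then follows for free: equality of the joint distributions of $(X, Z)$ and $(X, Z')$ trivially yields equality of every sub-tuple entropy, and the vanishing-information condition is the same as in Part~1.

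There is no substantial obstacle in the argument itself: the Copy Lemma is a routine construction at the level of distributions, with only mild bookkeeping to handle zero-probability values of $X$. The depth of the statement lies not in establishing it but in what it enables once applied to the enlarged tuple $(X, Y, Z, Z')$ in combination with Shannon-type inequalities, which is precisely the machinery the rest of the paper uses, through a linear programming solver, to derive the non-Shannon-type bounds on secret sharing and hat guessing games.
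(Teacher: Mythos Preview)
Your proposal is correct and follows precisely the construction sketched informally in the paragraph preceding the lemma; the paper itself does not supply a formal proof (the result is cited from \cite{zhang1998characterization,dougherty2011non}), so your explicit distributional formula and the two marginalisation checks constitute a faithful formalisation of that sketch.
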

	
	Item~1 above is the probabilistic statement, which implies the entropic statement item~2. We use the latter in our applications.
	
	Now we can use linear programming to derive consequences of all the Shannon inequalities for all variables ($X,Y,Z,Z'$) and the equalities that are guaranteed by our constructions. Zhang and Yeung \cite{zhang1998characterization} discovered that this way \emph{we get new inequalities that include only original variables $X,Y,Z$}. By `new', we mean inequalities that are non-Shannon-type, i.e. they are not linear combinations of Shannon inequalities for original variables. Then these new inequalities can be used explicitly, by adding them to the list of Shannon inequalities, or implicitly.
	
	We also extensively use the symmetries of the problem, which guarantee that an optimal solution can be found among the symmetric ones. This helps to reduce the dimension of the linear program and let the solver work faster.
	
	We discuss these tricks in detail in the corresponding sections.
	
	\subsection{Preliminaries of Secret Sharing}
	
	Secret sharing was independently introduced in \cite{blakley1979safeguarding} and \cite{shamir1979share}. These original papers studied a class of secret sharing schemes which are now called \emph{threshold schemes}. A more general definition of secret sharing was introduced by Ito, Saito and Nishizeki \cite{ito1989secret}.
	One of the relatively recent surveys on the topic is \cite{beimel2011secret}, see also the lecture notes \cite{padro2012lecture}. 
	
	The aim of secret sharing is to distribute a secret between participants by giving each of them a personal share, such that every `accepted' coalition of participants combining their shares can reconstruct the secret, whereas no `forbidden' coalition of participants can get any information about it. One of the main problems in the field is to compute, for a given access structure, the optimal information ratio, which measures how large must be the shares for a secret of given size. In general, this problem remains widely open.
	The researchers working in this field keep improving upper and lower bounds for the information ratio of many non-trivial access structures, however, there still is an exponential gap between lower and upper bounds. 
	
	Let us give the standard formal definitions of access structure and secret sharing scheme.
	\begin{definition}[Access Structure]
		An \emph{access structure} for secret sharing among $n$ participants $\llbracket1,n\rrbracket$ partitions all coalitions  into two disjoint classes $\mathcal{A}$ (accepted coalitions) and $\mathcal{B}$ (forbidden coalitions) in such a way that every super-set of an accepted coalition is accepted.
	\end{definition}
	\begin{remark}
		An access structure is determined by the family of its minimal (by inclusion) accepted coalitions (denoted $\min\mathcal{A}$). A coalition is accepted if it contains a minimal accepted subset and forbidden otherwise.
	\end{remark}
	
	For a given access structure, we define the notion of a secret sharing scheme among $n$ participants requiring that every accepted coalition knows the secret and no forbidden one has any information about it.
	
	\begin{definition}[Secret Sharing Scheme]\label{def:SSS}
		We formally define a \emph{secret sharing scheme} for a given access structure with participants $1,\ldots n$ as a joint distribution (a tuple of random variables) $(S_0,S_1,\ldots S_n)$ satisfying the following conditions for each coalition $J$:
		\begin{equation}\label{eq:secret-sharing}
			\begin{array}{ll}
				H(S_0|S_J)=0,&\text{ if }J\text{ is an accepted coalition},\\
				H(S_0|S_J)=H(S_0),&\text{ if }J\text{ is a forbidden coalition}.
			\end{array}
		\end{equation}
		
		The random variable $S_0$ is called the \emph{secret key}, and $S_j$ for $j\in\llbracket1,n\rrbracket$ are the \emph{shares} given to each party and $S_J$ is short for $(S_j)_{j\in J}$.
	\end{definition}			
	
	Informally, this definition describes the process of generating simultaneously both the secret $S_0$ and shares $S_1,\ldots S_n$. We assume that the secret is non-trivial ($H(S_0)>0$).
	
	\smallskip
	
	Ito, Saito and Nishizeki proved in \cite{ito1989secret} that for every access structure there exists a secret sharing scheme.
	
	\begin{fact}[\cite{ito1989secret}]\label{prop-all-sss-acc-str}
		Every access structure admits a secret sharing scheme.
	\end{fact}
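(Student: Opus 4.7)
The plan is to construct an explicit secret sharing scheme for any given access structure by superposing an additive (XOR) scheme for each minimal accepted coalition, following the classical idea of Ito, Saito and Nishizeki. Without loss of generality the secret $S_0$ is taken to be a single uniformly distributed bit, since $H(S_0)>0$ is the only requirement and the construction will extend coordinate-wise to longer secrets.

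First I would enumerate the minimal accepted coalitions as $\min\mathcal{A}=\{A_1,\ldots,A_m\}$. For each $A_\ell=\{i_1,\ldots,i_{k_\ell}\}$, introduce independent uniformly distributed bits $R^{(\ell)}_1,\ldots,R^{(\ell)}_{k_\ell-1}$, fresh and mutually independent across different $\ell$ and independent of $S_0$, and assign to the participants of $A_\ell$ the $k_\ell$ pieces $R^{(\ell)}_1,\ldots,R^{(\ell)}_{k_\ell-1}$ and $R^{(\ell)}_{k_\ell}:=S_0\oplus R^{(\ell)}_1\oplus\cdots\oplus R^{(\ell)}_{k_\ell-1}$, one piece per participant of $A_\ell$. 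The share $S_i$ of participant $i$ is then defined to be the tuple of all pieces they receive, collected over every coalition $A_\ell$ containing $i$. For a singleton minimal coalition $A_\ell=\{i\}$ the empty-XOR convention gives $R^{(\ell)}_1=S_0$, so $i$ just receives a copy of $S_0$, as expected.

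Next I would verify the two conditions of Definition~\ref{def:SSS}. For any accepted coalition $J$, upward-closedness of $\mathcal{A}$ gives some $A_\ell\subseteq J$, and the participants of $J$ then jointly hold all $k_\ell$ pieces of the $\ell$-th sharing, recovering $S_0$ as their XOR; hence $H(S_0\mid S_J)=0$. For a forbidden coalition $J$, no $A_\ell$ can be a subset of $J$ (else $J$ itself would be accepted), so for every $\ell$ there is some participant $i_{j^\star_\ell}\in A_\ell\setminus J$. The corresponding piece acts as a one-time pad masking $S_0$ in the $\ell$-th sharing: the pieces from $A_\ell$ visible to $J$ are jointly uniform and independent of $S_0$. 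Because the randomness pools of distinct sharings are mutually independent, the whole tuple $S_J$ is independent of $S_0$, so $H(S_0\mid S_J)=H(S_0)$.

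There is no genuine obstacle here: the argument is a bookkeeping exercise combined with the one-time-pad property of XOR. The only care needed is to keep the randomness used for different minimal coalitions mutually independent, and to identify, for every forbidden coalition and every $\ell$, the specific missing pieceholder $i_{j^\star_\ell}$ whose bit hides $S_0$ in the $\ell$-th block; once this is set up cleanly the two entropy equalities of Definition~\ref{def:SSS} follow immediately.
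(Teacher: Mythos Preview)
Your construction is correct and is precisely the classical Ito--Saito--Nishizeki scheme that the paper cites; the paper does not supply its own proof of this fact but merely references \cite{ito1989secret} and remarks that Benaloh and Leichter observed it to be a special case of the monotone-formula construction. Your verification of both conditions of Definition~\ref{def:SSS} is sound, including the one-time-pad argument for forbidden coalitions.
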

	
	%
	%
	
	Benaloh and Leichter \cite{benaloh1988generalized} noted that the construction of the proof is a special case of a more general one that starts from the monotone boolean function that describes the access structure. 
	
	\begin{definition}[Information Ratio]
		The \emph{information ratio of a secret sharing scheme} is the proportion of the size of the largest key to that of the secret, i.e. $\max_{i}\frac{H(S_i)}{H(S_0)}$.
		
		The \emph{information ratio of an access structure} is the infimum of the information ratios of the secret sharing schemes on the access structure.
	\end{definition}
	
	Note that in this definition we do not restrict the size of $S_0$; it may happen that schemes with good ratio exist only for large $S_0$. 
	
	Given an access structure, we may ask what its information ratio is. The ratio is at least $1$ for every non-trivial perfect access structure (if there is at least one accepted coalition). Indeed, if $S_1,\ldots S_k$ are shares for a minimal accepted coalition, and $S_0$ is the secret; then $$I(S_0:S_1,\ldots S_{k-1})=0$$ $$\text{and }I(S_0:S_1,\ldots S_k)=H(S_0).$$ The chain rule guarantees that $H(S_k)\ge H(S_0)$. This motivates the following definition.
	
	\begin{definition}[Ideal Secret Sharing]
		A secret sharing scheme with the information ratio $1$ is called an \emph{ideal} secret sharing scheme. An access structure that admits an ideal secret sharing scheme is called an \emph{ideal} access structure.
	\end{definition}
	
	The secret sharing scheme proposed by Shamir is ideal. It works as follows. 			
	Let $n$ be the number of participants and $d$ be the threshold. Consider a finite field $K$ with at least $n+1$ elements, and fix $n+1$ of them. Generate randomly a polynomial of degree at most $d-1$ with coefficients in $K$. Give the values of $P$ on $n$ chosen elements of $K$ to $n$ participants, and let the secret be the value of the polynomial on the last chosen element of $K$. Together every $d$ participants can reconstruct the polynomial and find the secret; whereas for any coalition of $d-1$ participants, all values of the secret have equal probabilities (there is exactly one polynomial for every choice of the secret in $K$). It is clear that smaller coalitions have no information about the value of the secret. The entropy of every share and that of the secret are $\log_2|K|$, so the information ratio of this scheme is $1$.
	
	On one hand some access structures, such as the threshold one, are ideal. On the other hand the general construction of a secret sharing scheme for arbitrary access structures gives only an exponential upper bound for information ratio. It may happen that for some access structures with $n$ participants the information ratio is indeed exponential in $n$. However,the currently known lower bounds are much weaker: 
	it was proven by Csirmaz \cite{csirmaz1997size} that there exist an $n$-participant access structure the information ratio of which is at least $n/\log_2n$.
	
	The common approach to prove a lower bound for the information ratio of a certain access structure is to use the technique of information inequalities. We write down the equalities~\eqref{eq:secret-sharing} and all Shannon-type inequalities for the involved random variables and then use linear programming to combine these equalities and inequalities to derive a result
	\begin{equation} \label{eq:info-ratio}
		\max_iH(S_i)\ge r\cdot H(S_0)
	\end{equation}
	for a certain real number $r$. If we succeed, this means that the information ratio of this access structure is at least $r$. Such an argument can be found in \cite{capocelli1993size} among others.
	
	This simple scheme can be improved. One can add non-Shannon-type inequalities as well; these additional constraints may help to prove \eqref{eq:info-ratio} with a larger value of $r$. Proofs following this scheme can be found, for instance in \cite{beimel2008matroids} and \cite{metcalf2011improved}. However, we do not follow this scheme and do not explicitly add non-Shannon-type inequalities to our linear program. Instead, we do as follows:
	
	\begin{enumerate}
		\item We write the conditions \eqref{eq:secret-sharing};
		\item instead of looking for non-Shannon-type inequalities for the variables that appear in these conditions, we apply one or several times the copy lemma, thus get some new random variables and some equalities for their entropies;
		\item then we write down \emph{only Shannon-type inequalities} but for \emph{all} the involved random variables and then deduce $\eqref{eq:info-ratio}$ for some specific $r$. 
	\end{enumerate}
	In this way we implicitly use the non-Shannon-type inequalities for old variables, inequalities that can be derived by using the copy lemma to get new variables.	
	This type of argument is discussed in \cite{gurpinar2019use}. A similar approach (with the AK lemma instead of the copy lemma) was used earlier in \cite{farras2018improving} and later in \cite{bamiloshin2021common}.
	
	\subsection{Preliminaries of Hat Guessing Games}
	
	\emph{Hat guessing game} has been a well known recreational mathematics problem with many variants. In this article we are interested in the variant introduced by Riis \cite{riis2006utilising,riis2007information}. It is connected to some problems in network coding, and it uses some concepts from graph theory.
	
	Let $G=(V,E)$ be a finite directed graph where $V$ is the set of vertices and $E\subseteq V\times V$ the set of directed edges. Note that all the graphs we consider in this article are loopless. Let $s>1$ be an integer, and let us denote $A_s$ the set of colours. The game is played on a graph $G$ called the \emph{sight graph}.
	\begin{enumerate}
		\item At every node $v\in V$ there is a player.
		\item Every player is assigned a hat colour $s_v$ from $\llbracket1,s\rrbracket=\{1,2,\ldots s\}$ uniformly randomly and independent of the hats of the other players.
		\item The directed edges of the graph show who can see whom, they are directed in the direction of information (if the vertex $y$ is visible from the vertex $x$, then the edge is $y\rightarrow x$), see Figure~\ref{fig:dir-arrows}.
		\item The players win as a team if every single one of them guesses their own hat colour correctly and lose otherwise.
	\end{enumerate}
	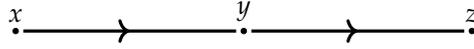
\begin{figure}[h]
		\centering
		\begin{tikzpicture}
			
			\filldraw (0,0) circle (1pt) node[align=left, above] {$x$};
			\filldraw (3,0) circle (1pt) node[align=center, above] {$y$};
			\filldraw (6,0) circle (1pt) node[align=right, above] {$z$};
			\begin{scope}[very thick,decoration={
					markings,
					mark=at position 0.5 with {\arrow{>}}}
				]
				\draw[postaction={decorate}] (0.1,0)->(2.9,0);
				\draw[postaction={decorate}] (3.1,0)->(5.9,0);
			\end{scope}
		\end{tikzpicture}
		\caption{On the graph above, $y$ can see $x$, and $y$ is seen by $z$.}\label{fig:dir-arrows}
	\end{figure}
	So the players do not compete but cooperate to win. Although the players cannot communicate after the game starts (they cannot even hear what another guesses), that is when the hat colours are determined; they can agree on a strategy beforehand and they a priori know $G$, $s$ and which player is on which vertex. This is shortly called a game on $(G,s)$ or a $(G,s)$-game. As $G$ is loopless, clearly there is no strategy to win with probability $1$. The aim is to maximize the probability of winning. Below we give a formalization of what a strategy is.
	\begin{definition}[Strategy]
		Let $G=(V,E)$ be the graph on which the game is played with $s$ colours.
		
		A \emph{guessing function} for the player on $x\in V$ is a mapping from $\llbracket1,s\rrbracket^{\{y\in V\mid (y,x)\in E\}}$ to $\llbracket1,s\rrbracket$. Intuitively, a guessing function is a table that shows what to guess for every possible configuration of what this player can see. Note that these configurations are equiprobable.
		
		A \emph{strategy} is a family of guessing functions $\mathcal{F}=(f_v)_{v\in V}$ for every vertex of the graph.
	\end{definition}
	Note that there are finitely many guessing functions for a player and thus finitely many strategies for a given $(G,s)$-game.
	
	In case we want to talk about \emph{random strategy}, we call strategy defined above \emph{deterministic} and define random strategy as probability distribution on deterministic strategies. However, no random strategy can do any better than the best deterministic strategy in terms of probability of winning. Indeed the probability of winning for a random strategy is a weighted average of the probabilities of winning of deterministic strategies. Therefore we only concentrate on deterministic strategies.
	
	\begin{definition}[Guessing Number]\label{defin:gn}
		The \emph{guessing number} of a game  measures the increase in the probability of correctly guessing the colours when playing with an optimal strategy (compared to a trivial strategy of choosing arbitrary colours as answers). Formally we denote it $gn(G,s)$:
		$$
		gn(G,s):=\max_{\mathcal{F}\text{ strategy}}\log_s\frac{\prob[\text{winning with }\mathcal{F}]}{s^{-|V|}}$$
		$$=\max_{\mathcal{F}\text{ str.}}\log_s\left|\{\text{winning config. in }\llbracket1,s\rrbracket^V\text{ for }\mathcal{F}\}\right|
		$$
	\end{definition}
	\begin{remark}
		Intuition behind this definition: the guessing number of the graph is $k$, if a best strategy gives the probability of winning that is $s^k$ times larger compared to the naive strategy where each player chooses an arbitrary colour, independently of what hats the neighbours receive. This value is the same as the logarithm on base $s$ of the cardinality of the largest set of configurations on which there is a winning strategy.
	\end{remark}
	\begin{remark}\label{rem:acyclicity}
		The guessing number of an acyclic graph is $0$ \cite[Lemma~3]{riis2006utilising}.
	\end{remark}
	\begin{remark}
		There is a generalization of this concept independent of $s$, known as \emph{asymptotic guessing number}. This is the limit of the guessing number as $s$ goes to infinity: $\lim_{s\to\infty}gn(G,s)$. The limit exists (thus the asymptotic guessing number is well defined) and it is an upper bound for the guessing number for any value of $s$.
	\end{remark}
	There is no known algorithm to compute these numbers for a given graph, and for some graphs only upper and lower bounds (that do not match each other) are known. 
	
	The lower bounds are proven using fractional clique cover of the graph and the upper bounds are proven using information inequalities see \cite{christofides2011guessing, baber2016graph}. For undirected graphs (graphs where $x\rightarrow y$ if and only if $y\rightarrow x$) with less than 10 vertices, the upper bounds given by Shannon-type inequalities match the lower bounds \cite{baber2016graph} thus the guessing numbers are known (at least up to floating point arithmetic errors).
	
	Let us give a very brief sketch of how to reduce the problem of finding an upper bound on the guessing number to a question about entropies and inequalities. Ultimately to give an upper bound for the guessing number is to give an upper bound for the number of configurations on which the players can win with a best strategy. Hence we can consider a particular probability distribution over the hat colours, where winning configurations are equiprobable and losing configurations have probability $0$. In this configuration the total entropy (in base $s$) of all the random variables $(X_v)_{v\in V}$ that correspond to the hat colours, is exactly the logarithm (in base $s$) of the cardinality of the set of winning configurations. Not that the following conditions are satisfied by these random variables:
	\begin{itemize}
	\item As each hat can get at most $s$ different values, $H_s(X_v)\le 1$,
	\item As each guess is correct under this distribution, $H_s(X_v|(X_u)_{u\rightarrow v})=0$.
	\end{itemize}	
	See Section~\ref{Ch:HGG} for more details.
	
	{\it Guessing Games And Network Coding:}\label{s:gg-nc}
	
	The problem of maximizing the probability of winning, seemingly a recreational mathematics problem, is directly related to the multiple unicast problem in network information theory. This connection, among other similar problems mentioned in \cite{riis2006utilising}, motivates the definition of guessing games and the study of their guessing numbers. In this article we do not use the correspondence between multiple unicast problems and hat guessing games, however, in what follows we explain the connection between them for the sake of self-containedness.
	
	\begin{definition}[Multiple Unicast Network Problem]
		Let us consider a directed acyclic graph $G=(V_t\sqcup V_r\sqcup V_d, E)$ of an information network with $n=|V_t|=|V_d|$ and $E\subset(V_t\times V_r)\sqcup(V_t\times V_d)\sqcup(V_r\times V_r)\sqcup(V_r\times V_d)$\footnote{$E$ is a proper subset because $G$ is acyclic, in particular the sub-graph of $G$ induced by $V_r$ is acyclic hence $E$ cannot contain whole $V_r\times V_r$}. The transmitters $x_1,x_2,\ldots x_n$ each want to send a message $a_i$ from the alphabet $A_s$ to their corresponding destinations $x'_1,x'_2,\ldots x'_n$, and there are $m=|V_r|$ routers $r_1,r_2,\ldots r_m$ in-between. let us note $\rightarrow(v):=\{u\in V\mid(u,v)\in E\}$ the set of vertices visible to $v$. Encoding functions $(f_v:A_s^{\rightarrow{v}}\to A_s)_{v\in V_r}$ for routers and decoding functions $(f_v:A_s^{\rightarrow{v}}\to A_s)_{v\in V_d}$ for destinations are called a \emph{protocol}. The task of deciding whether a protocol to ensure that every transmitted message is received by the corresponding destination exists for such a network and if exists, finding such a protocol is called \emph{multiple unicast network problem}.
		
		Note that each node $v\in V_t\sqcup V_r$ can send only one message and that all the nodes in $\leftarrow(v):=\{u\in V\mid(v,u)\in E\}$ receive the same message from $v$. Neither can $v$ send different messages to each of them nor can it send two consecutive messages $a_1,a_2\in A_s$ in a row. First the transmitters send their messages, then every node that receives all its due messages (one for each of its incoming edges) sends their message and so on until every edge has transmitted a message.
	\end{definition}
	
	\begin{theorem}[\cite{riis2006utilising}]\label{thm:hgg-nit-equ}
		Given a multiple unicast network problem with the network graph $G$ with $|V_t|=|V_d|=n$, $|V_r|=m$ and a transmission alphabet $A_s$, one can construct a graph $G'=(V',E')$ with $|V'|=n+m$ such that the following are equivalent:
		\begin{enumerate}
			\item There exists a solution to the multiple unicast network problem on $G$ with the message set $A_s$.
			\item The guessing number of $(G',s)$ is at least $n$.
			\item The guessing number of $(G',s)$ is exactly $n$.
		\end{enumerate}
		Moreover, the solutions of the guessing game and the network problem are in one to one correspondence.
	\end{theorem}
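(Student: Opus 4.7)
The plan is to construct $G'$ explicitly, prove an unconditional upper bound $gn(G',s)\le n$ that makes (2) and (3) automatically equivalent, and then set up a bijection between solutions of the network problem and strategies that attain the bound.

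\textbf{Construction of $G'$.} I would form $V'$ by identifying each transmitter $x_i\in V_t$ with its destination $x'_i\in V_d$, calling the merged vertex $v_i$, and keeping the routers $r_1,\ldots,r_m$ unchanged; thus $|V'|=n+m$. The edges of $G'$ are inherited: an edge $x_i\to r_j$ or $x_i\to x'_k$ of $G$ becomes $v_i\to r_j$ or $v_i\to v_k$ in $G'$, and similarly $r_j\to x'_i$ becomes $r_j\to v_i$ (and $r_j\to r_k$ stays). Loops at a $v_i$ (which would arise from $x_i\to x'_i$, trivial in the network model) are discarded; otherwise the graph is loopless because $G$ is acyclic.

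\textbf{Upper bound $gn(G',s)\le n$.} For any strategy $\mathcal{F}$ on $G'$, the set $W_{\mathcal{F}}$ of winning configurations $(s_v)_{v\in V'}\in\llbracket1,s\rrbracket^{V'}$ satisfies, for every router vertex $r_j$, the equation $s_{r_j}=f_{r_j}\bigl((s_u)_{u\to r_j}\bigr)$. Because the subgraph of $G$ induced by $V_r$ is acyclic, I would induct on a topological order of the routers to conclude that the hat $s_{r_j}$ is uniquely determined by the hats $(s_{v_i})_{i=1}^{n}$ at the merged vertices. Hence $|W_{\mathcal{F}}|\le s^n$, which gives $gn(G',s)\le n$ and instantly collapses (2) and (3) into one condition.

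\textbf{Bijection between protocols and strategies achieving $n$.} For the direction (1) $\Rightarrow$ (3), given a protocol $(f_v)_{v\in V_r\sqcup V_d}$ I would define the guessing function on the router $r_j$ to be its encoder $f_{r_j}$, and on the merged vertex $v_i$ to be the decoder $f_{x'_i}$ (which reads exactly the incoming neighbours of $v_i$ in $G'$, namely those visible to $x'_i$ in $G$). Any assignment $(a_1,\ldots,a_n)\in A_s^n$ to the merged vertices, extended by the encodings along the acyclic router layer, is a winning configuration by the protocol's correctness; this yields $|W_{\mathcal{F}}|\ge s^n$ and, with the upper bound, equality. Conversely, for (2) $\Rightarrow$ (1), if $gn(G',s)\ge n$ then by the upper bound $|W_{\mathcal{F}}|=s^n$ for a best strategy $\mathcal{F}$; the determination argument shows each tuple $(a_1,\ldots,a_n)$ extends to exactly one winning configuration, so the guessing functions on routers can be read as encoders and those on merged vertices as decoders that correctly recover $a_i$ from the incoming information at $x'_i$. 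The implication (3) $\Rightarrow$ (2) is immediate, closing the cycle and simultaneously establishing the one-to-one correspondence claimed in the ``moreover'' clause.

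The main obstacle I anticipate is bookkeeping in the construction rather than a deep mathematical hurdle: one must be careful with the edge-direction convention (the paper orients edges along the flow of information, opposite to some conventions), with the handling of edges $x_i\to x'_j$ for $i\ne j$ that in $G'$ become $v_i\to v_j$, and with the fact that the router layer is acyclic in $G$ but the full $G'$ is typically not, which is precisely what allows a nonzero guessing number. Once these points are settled, the inductive determination of router hats and the dictionary between encoding/decoding functions and guessing functions are straightforward.
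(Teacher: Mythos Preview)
The paper does not supply its own proof of this theorem: it is quoted as a result of Riis \cite{riis2006utilising} and no argument is given in the text. Your sketch is essentially the standard Riis construction and argument, and it is correct. The merging of each transmitter with its destination is exactly the graph $G'$ Riis builds; your upper bound $gn(G',s)\le n$ is the observation that deleting the $n$ merged vertices leaves the acyclic router layer (cf.\ Remark~\ref{rem:acyclicity}), so on any winning configuration the router hats are determined by the $n$ merged-vertex hats via topological induction; and your dictionary ``router guessing function $\leftrightarrow$ encoder, merged-vertex guessing function $\leftrightarrow$ decoder'' is precisely the bijection Riis exhibits. The only cosmetic point to watch is the one you already flag: the paper's edge convention is ``information flows along the arrow,'' so an edge $x_i\to x'_j$ with $i\ne j$ in $G$ becomes $v_i\to v_j$ in $G'$, and the in-neighbourhood of $v_i$ in $G'$ is exactly the in-neighbourhood of $x'_i$ in $G$, which is what makes the decoder/guessing-function identification well typed.
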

	
	Note that computing the \emph{guessing number} of a graph is not always easy.
	
	\section{Symmetries}\label{symmetry-section}
	
	Symmetries of the underlying structures (access structures for secret sharing or sight graphs for guessing games) can be exploited in the optimization problems to decrease the complexity of the problem. When combined with the copy lemma, the symmetry constraints force symmetric solutions to our linear programs without the symmetric applications of the copy lemma, thus the use of symmetries may improve the resulting optimal value. From another perspective, by removing the symmetric applications of the copy lemma, which are very costly in the complexity, of a linear program we may lose the optimal value but symmetry constraints may decrease the loss in the optimal value.
	
	\subsection{Symmetries and Linear Programming}
	
	Consider an optimization problem of the following form
	
	\begin{equation}\label{lp-sym}
		\begin{array}{c}
			\max f(v)\\
			\text{subject to:}\\
			v\in E\subset\overline{\Gamma_n^*}
		\end{array}
	\end{equation}
	
	where $f:\mathbb{R}^{\mathcal{P}(\llbracket1,n\rrbracket)\setminus\varnothing}\to\mathbb{R}$ is a linear form. Then we make the following simple observation.
	
	\begin{lemma}\label{l:prop-sym}
		Suppose
		\begin{itemize}
			\item $E$ is convex,
			\item there exists a group $G$ which acts on vectors $\mathbb{R}^{2^n-1}$ such that:
			\begin{itemize}
				\item $E$ is invariant under $G$, i.e. for all $u\in E,\sigma\in G$, we have $\sigma\cdot u\in E$,
				\item $f$ is invariant under $G$, i.e. for all $u\in E,\sigma\in G$, we have $f(u)=f(\sigma\cdot u)$.
			\end{itemize}
		\end{itemize}
		Then we have an optimal solution of \ref{lp-sym} invariant under $G$. 
	\end{lemma}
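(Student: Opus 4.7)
The plan is the standard symmetrization argument: take any optimizer, average its orbit under $G$, and check that the averaged point is still feasible, still optimal, and invariant. This requires two implicit ingredients from the setup — that the group action is linear on $\mathbb{R}^{2^n-1}$ (which is the case for the natural permutation actions used in the applications) and that $G$ is finite, so that averaging makes sense; both are clear in the intended applications.

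First I would pick an optimal solution $v^{*}\in E$ of~\eqref{lp-sym}. For every $\sigma\in G$, the invariance of $E$ gives $\sigma\cdot v^{*}\in E$, and the invariance of $f$ gives $f(\sigma\cdot v^{*})=f(v^{*})$; thus each orbit point is feasible and optimal. Next, define the averaged point
\[
\bar v \;:=\; \frac{1}{|G|}\sum_{\sigma\in G}\sigma\cdot v^{*}.
\]
Since $E$ is convex and $\bar v$ is a convex combination of points of $E$, we have $\bar v\in E$. Linearity of $f$ yields $f(\bar v)=\frac{1}{|G|}\sum_{\sigma}f(\sigma\cdot v^{*})=f(v^{*})$, so $\bar v$ is optimal.

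It remains to verify that $\bar v$ is $G$-invariant. For any $\tau\in G$, linearity of the action gives
\[
\tau\cdot\bar v \;=\; \frac{1}{|G|}\sum_{\sigma\in G}(\tau\sigma)\cdot v^{*} \;=\; \frac{1}{|G|}\sum_{\sigma'\in G}\sigma'\cdot v^{*} \;=\; \bar v,
\]
by reindexing the sum with $\sigma'=\tau\sigma$, using that $\sigma\mapsto\tau\sigma$ is a bijection of $G$. Hence $\bar v$ is a $G$-invariant optimal solution.

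The only place where care is needed is to confirm that the ambient group action really is linear and that $G$ is finite, because the averaging step uses both properties crucially; without linearity the sum would not commute with the action of $\tau$, and without finiteness the uniform average would have to be replaced by a Haar-measure integral (which still works, but is unnecessary machinery here). Neither issue arises in the paper's applications, where $G$ is always a finite subgroup of $S_n$ acting by the induced permutation of coordinates of $\mathbb{R}^{2^n-1}$, and this is what I would note at the end of the proof.
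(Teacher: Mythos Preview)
Your proof is correct and follows essentially the same averaging argument as the paper: take an optimal $v^{*}$, form $\bar v=\frac{1}{|G|}\sum_{\sigma}\sigma\cdot v^{*}$, and use convexity of $E$ together with linearity and $G$-invariance of $f$. Your version is slightly more explicit than the paper's (you spell out the reindexing showing $\bar v$ is $G$-invariant and flag the implicit finiteness and linearity hypotheses), but the route is the same.
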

	\begin{proof}
		Let $v\in E$ be an optimal solution. Since $E$ is invariant under $G$, for all $\sigma\in G$, $\sigma\cdot v$ is also in $E$. Moreover since $E$ is convex, average of these vectors, namely $$v'=\frac1{|G|}\sum_{\sigma\in G}\sigma\cdot v$$ is also in $E$. By definition $v'$ is symmetric under $G$. By linearity of $f$ and then by invariance of $f$ $G$ we have the following:
		$$
		\begin{array}{rcl}
			f(v')&=&f(\frac1{|G|}\sum_{\sigma\in G}\sigma\cdot v)\\
			&=&\frac1{|G|}\sum_{\sigma\in G}f(\sigma\cdot v)\\
			&=&\frac1{|G|}\sum_{\sigma\in G}f(v)\\
			&=&f(v)
		\end{array}
		$$
		Therefore $v'$ too is an optimal solution of \ref{lp-sym}.
	\end{proof}
	
	We can apply this observation to linear programs for secret sharing and hat guessing games. The set $E\subset\overline{\Gamma_n^*}$ will correspond to particular restrictions that stems from the application (access structure for secret sharing and sight graph for guessing games). Since we know (from the lemma above) that there exists a symmetric solution, we can simply add symmetry constraints in the conditions of the linear program without any loss in the optimal value.
	
	\subsection{Symmetries Of Access Structures}\label{s:sym}
	
	To prove lower bounds on the information ratio of an access structure we reduce the problem of secret sharing to one of linear programming, as it was proposed in \cite{padro2013finding}.
	Let us sketch this simple reduction. For a fixed access structure $(\mathcal{A},\mathcal{B})$, we want to bound the information ratio of all secret sharing schemes $(S_0,\ldots S_n)$ realizing this access structure. The fact that $(S_0,\ldots S_n)$ is a secret sharing scheme realizing the access structure $({\cal A}, {\cal B})$ can be expressed as a family of linear constraints for the entropies of the involved variables, see \eqref{eq:secret-sharing}. The information ratio of a scheme is by definition $\max_i \frac{H(S_i)}{H(S_0)}$.	This objective function is not a linear combination of entropies. To overcome this obstacle, first of all, we introduce the normalization condition $H(S_0)=1$ and reduce the objective function to $\max_i H(S_i)$. This is still not a linear combination of entropies, so we have to introduce one more parameter (real variable) $x$ and add the constraints
	$$
	x\ge H(S_i),\ i=1,\ldots n.
	$$
	Now we can take the value of $x$ as the objective function, the minimal value of $x$ provides the maximum of $H(S_i)$.
	
	We can add to this optimization problem the constraints representing the linear inequalities that are valid for entropies of all random variables and, in particular, the random variables $(S_0,\ldots S_n)$.
	We combine this technique of \cite{padro2013finding} with symmetry conditions and then a series of applications of the copy lemma and then take a linear relaxation of this optimization problem to get a linear program as in \cite{gurpinar2019use}.  
	Let us formulate this general scheme as a proposition.
	
	\begin{proposition}\label{p:linear-program-for-ratio}
		Let $A$ be an access structure with $n$ participants and $(S_0,S_1,\ldots S_n)$ be a secret sharing scheme realizing it. We extend this distribution by adding $\ell$ random variables $S_{n+1},\ldots S_{n+\ell}$ using the copy lemma.
		The linear program described below provides a lower bound on the information ratio of $A$:
		
		$$
		\begin{array}{ll}
			&\min x\\
			&\text{subject to:}\\
			(i)&x\ge h_{S_i}\text{ for every } i\in\llbracket1,n\rrbracket\\
			(ii)&\text{the equalities \eqref{eq:secret-sharing} for the entropies of }S_0,\ldots S_n\\
			&\text{ which define the access structure }A\\
			(iii)&\text{classical information inequalities}\\&\text{for }S_0,\ldots S_{n+\ell}\\
			(iv)&h_{S_0}=1\text{ normalization}\\
			(v)&\text{the symmetry constraints on the variables }S_i,i\in\llbracket1,n\rrbracket\\
			&\text{under the symmetry group of the access structure }A\\	
			(vi)&\text{the equalities for entropies that define each of the random}\\
			&\text{variables }S_{n+1},\ldots S_{n+\ell}\text{ as a copy of other variables}\\
			&\text{(with smaller indices), obtained using the copy lemma}
		\end{array}
		$$
	\end{proposition}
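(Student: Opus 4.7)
The plan is to show that for every secret sharing scheme $(S_0,S_1,\ldots,S_n)$ realizing $A$ with information ratio $r=\max_i H(S_i)/H(S_0)$, the program admits a feasible solution with objective value at most $r$; taking the infimum over schemes then gives that the optimum of the program is a lower bound on the information ratio of $A$.

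Given such a scheme, I would first apply the prescribed sequence of copy lemma operations to extend $(S_0,\ldots,S_n)$ to a joint distribution $(S_0,\ldots,S_{n+\ell})$: item~1 of the copy lemma gives existence step by step, and item~2 supplies the entropic equalities listed in (vi). Rescaling the resulting entropy vector by the factor $1/H(S_0)$, which is permissible because $\overline{\Gamma_{n+\ell+1}^*}$ is a convex cone, yields a vector $v$ that satisfies the normalization (iv), the access-structure equalities (ii) (preserved by positive scaling), the Shannon-type inequalities (iii) (automatic for any entropy vector), and the homogeneous linear equalities (vi). Setting $x=r$ makes (i) hold by the definition of $r$, so $(r,v)$ is feasible for the program with the symmetry constraints (v) dropped, and the optimum of that relaxed program is at most $r$.

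To recover the full program with (v) I would invoke Lemma~\ref{l:prop-sym}. Convexity of the feasible set of the relaxation is clear; (i), (ii) and (iv) are $G$-invariant because $G$ is by definition the group of permutations preserving $A$; and (iii) is permutation invariant. The delicate point is (vi), which depends on the concrete copy lemma schedule. I would enforce by construction that this schedule is $G$-invariant, in the sense that for every copy variable $S_{n+k}$ defined as a copy of some subset given another subset, the analogous variables obtained by applying every $\sigma\in G$ to those subsets are also present among $S_{n+1},\ldots,S_{n+\ell}$. Under this convention (vi) is $G$-invariant, so Lemma~\ref{l:prop-sym} produces a $G$-invariant optimum of the relaxed program, which is automatically feasible for the full program; the two optima therefore coincide.

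Combining the two steps, the optimum of the full program is at most $r$ for any scheme of ratio $r$, hence at most the information ratio of $A$. The main obstacle is precisely the $G$-invariance of the copy lemma schedule in (vi): without it the symmetrization argument breaks down for the new variables, because their defining equalities reference specific original indices. Closing the family of copy lemma applications under the action of $G$, or equivalently working with symmetric copy lemma families from the outset, is the natural way to resolve this and will be the convention adopted in the concrete computations of the later sections.
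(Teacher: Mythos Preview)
Your diagnosis of the obstacle is correct: the copy-lemma constraints in (vi) reference specific original indices and are generally not $G$-invariant, so Lemma~\ref{l:prop-sym} does not apply to the full linear program. Your proposed fix---closing the copy schedule under the $G$-action---would make the argument go through, but it proves a weaker statement than the proposition. The proposition as written allows an \emph{arbitrary} (possibly asymmetric) sequence of copy-lemma applications, while the symmetry constraints (v) are imposed only on the original variables $S_i$, $i\in\llbracket1,n\rrbracket$; the copy variables never appear in (v). This is not a minor point: the paper stresses in Section~\ref{symmetry-section} that symmetry constraints ``force symmetric solutions \ldots\ without the symmetric applications of the copy lemma'', and the computational saving from not having to symmetrize the copies is precisely what makes the method useful.

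The paper sidesteps the issue by reversing the order of your two steps. Lemma~\ref{l:prop-sym} is applied not to the LP but to the intermediate optimization problem over the almost-entropic region for $S_0,\ldots,S_n$ alone, with constraints (i), (ii), (iv); there the feasible set and the objective are manifestly $G$-invariant, so (v) can be adjoined without changing the optimum. Only afterwards are the copy variables introduced: since the copy lemma applies to every (almost) entropic point, any feasible symmetric point on the original variables admits an extension satisfying (iii) and (vi), so these constraints are redundant at the almost-entropic level and the optimal value is unchanged. Dropping the almost-entropic constraint then gives the linear program of the proposition and can only lower the minimum. In short, symmetrize first and copy afterwards; then $G$-invariance of the copy schedule is never needed.
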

	\begin{proof}
		We refer to the sets of constraints above as items (i), (ii), (iii), (iv), (v) and (vi). 
		
		It is clear (\cite{farras2018improving},\cite{gurpinar2019use}) that a linear program with items (i), (ii), (iv) with a subset of (iii) involving only $S_0,\ldots S_n$ gives a lower bound on the information ratio, since they make the linear relaxation of the following optimization problem on random variables.
		
		$$
		\begin{array}{l}
			\min x\\
			\text{subject to:}\\
			x=\max_i\frac{H(S_i)}{H(S_0)}\\
			S_0,\ldots S_n\text{ are random variables realizing }(\mathcal{A},\mathcal{B}) 
		\end{array}
		$$
		
		A relaxation of this optimization problem is as follows.
		
		$$
		\begin{array}{l}
			\min x\\
			\text{subject to:}\\
			x=\max_ih_{S_i}\\
			h_{S_0}=1\\
			(h_{S_I})_{\varnothing\neq I\subset\llbracket1,n\rrbracket}\in\overline{\Gamma_n^*}\\
			(h_{S_I})_{\varnothing\neq I\subset\llbracket1,n\rrbracket}\text{ satisfies the linear relaxation of }\ref{eq:secret-sharing}\\
			\text{ for }(\mathcal{A},\mathcal{B})
		\end{array}
		$$
		
		This optimization problem is not a linear program because to characterize $\overline{\Gamma_n^*}$ we would need infinitely many linear information inequalities.
		
		Since the set of constraints \ref{eq:secret-sharing} for the access structure $(\mathcal{A},\mathcal{B})$ is invariant under the symmetry group of $(\mathcal{A},\mathcal{B})$ by definition and that the other constraints and the objective function are also invariant, we can use Lemma~\ref{l:prop-sym}. Hence, we can add symmetry conditions to this optimization problem without changing its optimal value. Thus, we get the following optimization problem.
		
		$$
		\begin{array}{l}
			\min x\\
			\text{subject to:}
			x=\max_ih_{S_i}\\
			h_{S_0}=1\\
			(h_{S_I})_{\varnothing\neq I\subset\llbracket1,n\rrbracket}\in\overline{\Gamma_n^*}\\
			(h_{S_I})_{\varnothing\neq I\subset\llbracket1,n\rrbracket}\text{ satisfies the linear relaxation of }\ref{eq:secret-sharing}\\
			\text{ for }(\mathcal{A},\mathcal{B})\\
			\forall\sigma\in G, (h_{S_{\sigma\cdot I}})_{\varnothing\neq I\subset\llbracket1,n\rrbracket}=(h_{S_I})_{\varnothing\neq I\subset\llbracket1,n\rrbracket}\\
			\text{(symmetry constraints)}
		\end{array}
		$$
		
		Here $G$ is the symmetry group of $A$ and the action of $G$ on the almost entropic vector is by permuting its coordinates $\sigma\cdot(h_{S_I})_{\varnothing\neq I\in\llbracket1,n\rrbracket}=(h_{S_{\sigma\cdot I}})_{\varnothing\neq I\in\llbracket1,n\rrbracket}$. This justifies the addition of item~(iv) in the linear program.
		
		Now we can add copy lemma constraints (item~(v)) and extend the Shannon-type inequalities to all variables (full item~(iii)). Although these (in general any universally true information inequalities) are redundant for an optimization problem on almost entropic points, we can take a linear relaxation of this optimization problem and get the linear program in the statement of the proposition.
		
	\end{proof}
	
	See the appendix for the computation of the symmetry groups of the access structures we study in this work.
	
	\subsection{Symmetries of Sight Graphs}
	
	An analogous argument works to justify the use of symmetries for the linear programs used to get an upper bound on the asymptotic guessing number of a sight graph. We use Lemma~\ref{l:prop-sym} again.
	
	\begin{proposition}\label{p:linear-program-for-number}
		Let $G=(V,E)$ be a sight graph with $n$ vertices and $X_1,\ldots X_n$ the associated random variables as in \cite{christofides2011guessing}. We extend this distribution by adding $\ell$ random variables $X_{n+1},\ldots X_{n+\ell}$ using the copy lemma.
		The linear program described below provides a lower bound on the asymptotic guessing number of $G$:
		
		$$
		\begin{array}{ll}
			&\max h_{X_{\llbracket1,n\rrbracket}}\\
			&\text{subject to:}\\
			(i)&h_{X_i}\le1\text{ for every } i\in\llbracket1,n\rrbracket\\
			(ii)&h_{\{X_i,X_j|(j,i)\in E\}}-h_{\{X_j|(j,i)\in E\}}=0\text{ for each }i\in\llbracket1,n\rrbracket\\
			(iii)&\text{classical information inequalities}\\&\text{for }X_1,\ldots X_{n+\ell}\\
			(iv)&\text{the symmetry constraints on the variables }X_i,i\in\llbracket1,n\rrbracket\\
			&\text{under the symmetry group of the sight graph }G\\
			(v)&\text{the equalities for entropies that define each of the random}\\
			&\text{variables }X_{n+1},\ldots X_{n+\ell}\text{ as a copy of other variables}\\
			&\text{(with smaller indices), obtained using the copy lemma}\\
		\end{array}
		$$
	\end{proposition}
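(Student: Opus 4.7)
The plan is to mirror the proof of Proposition~\ref{p:linear-program-for-ratio}, adapted to a maximization over entropy vectors rather than a minimization over information ratios. By the sketch at the end of Section~\ref{ch:preliminaries}, for each integer $s>1$ the quantity $gn(G,s)$ admits an entropic characterization: the uniform distribution on the winning configurations of an optimal strategy yields $(X_v)_{v\in V}$ with $H_s(X_v)\le 1$, with $H_s(X_v\mid X_{N^-(v)})=0$ for $N^-(v):=\{u:(u,v)\in E\}$, and with $H_s(X_V)=gn(G,s)$. Hence $gn(G,s)$ is bounded by the supremum of $h_{X_V}$ over all vectors of $\overline{\Gamma_n^*}$ satisfying items~(i)--(ii), and by passing to $s\to\infty$ the same supremum bounds the asymptotic guessing number.

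Next, I would invoke Lemma~\ref{l:prop-sym} with the sight-graph automorphism group acting on $\mathbb{R}^{2^n-1}$ by permuting coordinates via $h_I\mapsto h_{\sigma\cdot I}$. The objective $h_{X_V}$ is invariant because it depends only on the full index set, and the constraint set is invariant because $\overline{\Gamma_n^*}$ is closed under coordinate permutations (Remark~\ref{rem:convex-cone}) and the edge relation in item~(ii) is preserved by automorphisms of $G$. Convexity of $\overline{\Gamma_n^*}$ then yields a symmetric optimum, justifying the inclusion of item~(iv) at no cost.

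Then I would extend to $n+\ell$ variables by successive applications of the copy lemma. Each application takes any feasible distribution to an extension whose entropy vector agrees on the original coordinates and satisfies the equalities of item~(v); hence projecting the extended feasible set onto the original coordinates recovers the original feasible set, preserving the supremum. Finally, I would replace $\overline{\Gamma_{n+\ell}^*}$ by the outer approximation $\Gamma_{n+\ell}$ cut out by the elemental Shannon inequalities of item~(iii); this relaxation only enlarges the feasible region of the maximization, so the resulting linear program attains an optimum at least as large as the original supremum, and hence provides the desired bound relating the LP value to the asymptotic guessing number.

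The main care points I foresee are, first, tracking the normalization across the limit $s\to\infty$ — working in base $s$ keeps item~(i) independent of $s$, but I would verify that the chain ``winning-configuration distribution $\Rightarrow$ copy-lemma extension $\Rightarrow$ almost-entropic vector at the extended scale'' is uniform in $s$, so that the LP optimum indeed controls $\lim_s gn(G,s)$ rather than some $s$-dependent quantity; and, second, that items~(iv) and~(v) remain jointly satisfiable: the copy-lemma variables must be indexed so that the extended action of the automorphism group is still a coordinate permutation, which may require applying the copy lemma in symmetric batches of orbits rather than one variable at a time.
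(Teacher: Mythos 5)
Your proof follows essentially the same route as the paper's: the entropic characterization via the uniform distribution on winning configurations, relaxation to almost-entropic vectors, symmetrization via Lemma~\ref{l:prop-sym}, extension by the copy lemma, and finally the Shannon-type outer relaxation to obtain the linear program. Your closing worry about indexing the copy variables so that the automorphism group still acts by coordinate permutations is unnecessary: item~(iv) constrains only coordinates of the original $n$ variables and is justified \emph{before} the copy variables are introduced (exactly the order you use), so the copy-lemma applications need not respect the symmetry group --- indeed the paper stresses that this is the point of combining the two techniques.
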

	\begin{remark}
		In \cite{baber2016graph} non-Shannon-type inequalities were added to such a linear program instead of item~(v) (the copy lemma constraints).
	\end{remark}
	\begin{proof}
		We refer to the sets of constraints above as items (i), (ii), (iii), (iv), (v) and as in the previous proposition.
		
		It is known (\cite{christofides2011guessing}) that a linear program with items (i), (ii) and the subset of (iii) for random variables $X_1,\ldots X_n$ gives an upper bound on the asymptotic guessing number, as these constraints make the linear relaxation of the optimization problem below on random variables. 
		
		Let us first justify the item (v) using Lemma~\ref{l:prop-sym}.
		
		$$
		\begin{array}{l}
			\max H(X_1,\ldots X_n)\\
			\text{subject to:}\\
			H(X_i)\le1\\
			H(X_i|(X_j)_{(j,i)\in E})=0\\
			X_1,\ldots X_n\text{ are random variables}
		\end{array}
		$$
		
		Again we take a (not linear) relaxation of this optimization problem, it is as follows.
		
		$$
		\begin{array}{l}
			\max h_{X_{\llbracket1,n\rrbracket}}\\
			\text{subject to:}\\
			h_{X_i}\le1\text{ for all }i\in\llbracket1,n\rrbracket\\
			h_{\{X_i,X_j|(j,i)\in E\}}-h_{\{X_j|(j,i)\in E\}}=0\text{ for each }i\in\llbracket1,n\rrbracket\\
			(h_{X_I})_{\varnothing\neq I\subset\llbracket1,n\rrbracket}\in\overline{\Gamma_n^*}
		\end{array}
		$$
		
		Since the set of constraints of the optimization problem above is invariant under the symmetry group $S(G)$ of the graph $G=(V,E)$ by definition and that the objective function is also invariant, we can use Lemma~\ref{l:prop-sym} and add symmetry conditions without changing the optimal value:
		
		$$
		\begin{array}{l}
			\max h_{X_{\llbracket1,n\rrbracket}}\\
			\text{subject to:}\\
			h_{X_i}\le1\text{ for all }i\in\llbracket1,n\rrbracket\\
			h_{\{X_i,X_j|(j,i)\in E\}}-h_{\{X_j|(j,i)\in E\}}=0\text{ for each }i\in\llbracket1,n\rrbracket\\
			(h_{X_I})_{\varnothing\neq I\subset\llbracket1,n\rrbracket}\in\overline{\Gamma_n^*}\\
			\forall\sigma\in S(G), (h_{S_{\sigma\cdot I}})_{\varnothing\neq I\subset\llbracket1,n\rrbracket}=(h_{S_I})_{\varnothing\neq I\subset\llbracket1,n\rrbracket}\\
			\text{(symmetry constraints)}
		\end{array}
		$$
		
		Here the action of $S(G)$ on the almost entropic vector is by permuting its coordinates $\sigma\cdot(h_{S_I})_{\varnothing\neq I\in\llbracket1,n\rrbracket}=(h_{S_{\sigma\cdot I}})_{\varnothing\neq I\in\llbracket1,n\rrbracket}$.
		
		Now we can introduce the constraints from the copy lemma application in item~(v) and add the Shannon-type inequalities for all variables (full item~(iii)). Although these again are redundant for an optimization problem on almost entropic points, we can then take a linear relaxation of this optimization problem and get the linear program in the statement of the proposition.
		
	\end{proof} 
	
	\section{Secret Sharing}\label{ch:ss}
	
	There is a large class of access structures called \emph{linear access structures} (also known as \emph{vector space access structure}) that are ideal. Let us give their definition.
	
	\begin{definition}
		An access strucure is called \emph{linear} if the secret and the participants $1,\ldots n$ can be associated respectively to some vectors $v_0, v_1, \ldots v_n$ in a vector space such that a coalition $I$ is
		\begin{itemize}
			\item an accepted coalition if $v_0\in Vect((v_i)_{i\in I})$, that is $v_0$ belongs to the linear subspace span by the set of vectors $V_i$ for $i\in I$,
			\item forbidden otherwise.
		\end{itemize}
	\end{definition}

	Note that, in particular, threshold access structures are linear: we can take a vector space of dimension equal to the threshold $t$. Then choose vectors associated to participants one by one such that any $t$ of them are independent. This can be done by choosing the field large enough\footnote{If the field has $k$ elements, the vector space has $k^t$ elements. Suppose we have chosen $n$ vectors so far, this can forbid no more than $\binom n{t-1}k^{t-1}$ choices for the next vector. Thus $k>\binom n{t-1}$ ensures us.}.
	
	There is a more general class of access structures, namely access structures derived from matroids. The notion of a matroid provides a more abstract (combinatorial) notion of `independence' inspired by the properties of linear independence in vector spaces (see \cite{oxley1992matroid} for a detailed introduction).
	
	\subsection{Access Structures From Matroids}\label{s:mat}
	
	Let us give a simple way to obtain some matroids: Let $E$ be a set, if we choose some vector space and a vector $v_e$ for each $e\in E$ and then declare a subset of $E$ to be \emph{independent} when the corresponding vectors are linearly independent, this gives us a matroid. Matroids that can be obtained this way are called \emph{linearly representable}.
	
	The matroids with a ground set $E$ of cardinality 7 or less, as well as those with a ground set of cardinality 8 and rank different than 4, are all known to be \emph{linearly representable} \cite{fournier1971representation}. There exists 940 non-isomorphic matroids of rank 4 on 8 points (see \cite{oxley1992matroid}). There exist matroids with a ground set of cardinality 8 and rank 4 that are not linearly representable, and the first known such example and the most famous among them is the V\'amos matroid (\cite{vamos1968representation}, \cite[Proposition~2.2.26]{oxley1992matroid}).
	\begin{definition}
		The figure~\ref{img-vamos} defines V\'amos matroid: any set of size 4 or less except the five circuits marked below (as both sides of an edge or as four points on a coloured surface), are independent. The remaining sets, that is the five sets of size 4 marked in the figure as well as sets of size larger than 4, are dependent.
	\begin{figure}
		\begin{tikzpicture}
			\node[circle, draw] at (0,2) (01){$0,1$};
			\node[circle, draw] at (-2,0) (23){$2,3$};
			\node[circle, draw] at (2,0) (45){$4,5$};
			\node[circle, draw] at (0,-2) (67){$6,7$};
			\draw (01)--(23);
			\draw (01)--(45);
			\draw (23)--(45);
			\draw (23)--(67);
			\draw (45)--(67);
		\end{tikzpicture}
		\begin{tikzpicture}
			\coordinate (c0) at (0,2);
			\coordinate (c2) at (-2,0);
			\coordinate (c4) at (2,0);
			\coordinate (c6) at (0,-2);
			\coordinate (c1) at (1.4,1.9);
			\coordinate (c3) at (-0.6,-0.1);
			\coordinate (c5) at (3.4,-0.1);
			\coordinate (c7) at (1.4,-2.1);
			\draw[fill=blue!90,opacity=0.33] (c0)--(c1)--(c3)--(c2)--cycle;
			\draw[fill=blue!90,opacity=0.33] (c0)--(c1)--(c5)--(c4)--cycle;
			\draw[fill=blue!90,opacity=0.33] (c2)--(c3)--(c5)--(c4)--cycle;
			\draw[fill=blue!90,opacity=0.33] (c2)--(c3)--(c7)--(c6)--cycle;
			\draw[fill=blue!90,opacity=0.33] (c4)--(c5)--(c7)--(c6)--cycle;
			\draw[thick] (c0)--(c1)--(c3)--(c2)--cycle;
			\draw[thick] (c0)--(c1)--(c5)--(c4)--cycle;
			\draw[thick] (c2)--(c3)--(c5)--(c4)--cycle;
			\draw[thick] (c2)--(c3)--(c7)--(c6)--cycle;
			\draw[thick] (c4)--(c5)--(c7)--(c6)--cycle;
			\node[label={[label distance=0.1cm]112.5:$0$}] at (0,2) (l0){};
			\node[label={[label distance=0.1cm]157.5:$2$}] at (-2,0) (l2){};
			\node[label={[label distance=0.1cm]157.5:$4$}] at (2,0) (l4){};
			\node[label={[label distance=0.1cm]247.5:$6$}] at (0,-2) (l6){};
			\node[label={[label distance=0.1cm]67.5:$1$}] at (1.4,1.9) (l1){};
			\node[label={[label distance=0.1cm]337.5:$3$}] at (-0.6,-0.1)(l3){};
			\node[label={[label distance=0.1cm]337.5:$5$}] at (3.4,-0.1) (l5){};
			\node[label={[label distance=0.1cm]292.5:$7$}] at (1.4,-2.1) (l7){};
		\end{tikzpicture}
		\caption{V\'amos matroid. Edges on the left and blue coloured surfaces on the right show the 4-element circuits (minimal non-independent sets).}\label{img-vamos}
	\end{figure}
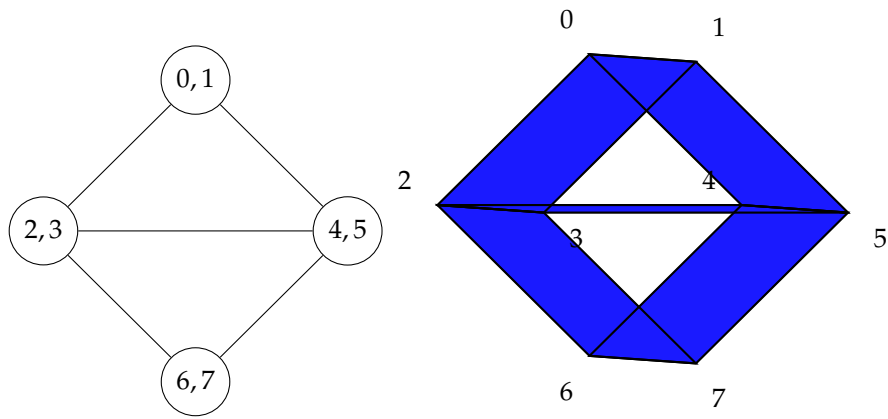
	\end{definition}
	
	Below we discuss seven other matroids with a ground set of cardinality 8 and rank 4 which are not linearly representable. 
	
	In the rest of the text we are only interested in connected matroids. More particularly \emph{matroid ports} (\cite{seymour1976forbidden}) and \emph{matroid port access structures} (\cite{farras2020secret}).
	
	\begin{remark}
		Since any upward closed set along with its complement define an access structure, so does any matroid port. Indeed, for any matroid $M'$ fix an element $p$ of the ground set $E$. We associate $p$ with the secret and identify the other elements of the ground set with the participants of a secret sharing scheme. We define the access structure in terms of the minimal accepted coalitions: a coalition $I$ is authorized if $I\cup\{p\}$ is a circuit.
		
		Note that the ground set of the matroid has $|E|$ elements, but the access structure we defined has $|E|-1$ participants.
	\end{remark}
	
	\begin{example}
	Let $\llbracket0,7\rrbracket$ be the ground set of the V\'amos matroid, as defined above.
	Let us associate $0$ with the secret and $\llbracket1,7\rrbracket$ with the participants of a secret sharing scheme. We obtain an access structure which has the following as the minimal accepted coalitions:
	\begin{itemize}
		\item $\{1,2,3\}$ and $\{1,4,5\}$\footnote{Because $\{0,1,2,3\}$ and $\{0,1,4,5\}$ are circuits.}
		\item all subsets of cardinality four of $\llbracket1,7\rrbracket$ except $\{2,3,4,5\}$, $\{2,3,6,7\}$, $\{4,5,6,7\}$, those containing $\{1,2,3\}$ and those containing $\{1,4,5\}$\footnote{Because all size five subsets of $\llbracket0,7\rrbracket$ containing $0$ are circuits except those that are strict supersets of circuits and that the circuits of size four or less are precisely $\{0,1,2,3\}$, $\{0,1,4,5\}$, $\{2,3,4,5\}$, $\{2,3,6,7\}$ and $\{4,5,6,7\}$.}
	\end{itemize}
	\end{example}

	In what follows we denote this access structure $\mathcal{V}$.
	
	\begin{remark}
		Note that if we associate the secret with the element $1$, $6$ or $7$ rather than $0$, then we obtain the same access structure up to renaming the participants. Indeed, these four elements in V\'amos matroid are interchangeable (symmetric to each other)\footnote{They form an orbit under the automorphism group $Aut(V_8)$ of the V\'amos matroid.}, as can be seen from Figure~\ref{img-vamos}. However, if we associate the secret with one of the elements $2$, $3$, $4$ or $5$, then we obtain a different access structure (denoted $\mathcal{V}^*$). The minimal accepted coalitions in this structure (up to renaming) are 
		\[
		\begin{array}{l}
			123,345,367,\text{ and all size four subsets of }\llbracket1,7\rrbracket\\
			\text{except those containing }123,345,367,1245,4567.
		\end{array}		
		\]
	\end{remark}
	
	It is known that ideal secret sharing is possible only if the access structure is a matroid port \cite{brickell1991classification}. Mart\'i-Farré and Padr\'o proved in \cite{marti2010secret} that any access structure with an information ratio less than $\sfrac32$ is based on a matroid. Thus, it is interesting to study the access structures in between these extremal cases: matroid ports without ideal secret sharing.
	
	It is clear by definition that the ports of linearly representable matroids are linear access structures and therefore, ideal access structures. So all the ports of matroids with seven or less points are known to have an ideal secret sharing. Thus, the problem of secret sharing on matroid ports is non-trivial for matroids with a ground set of cardinality eight or higher. The ports of matroids on eight points are access structures for seven participants (we denote them $1,2,3,4,5,6,7$, to be consistent with the
	
	\begin{table}[ht]
		\centering
		\begin{tabular}{|c|c|c|c|}
			\hline
			\thead{Access\\structure}&\thead{previously known lower\\bound based on\\AK lemma \cite{farras2020improving}}&\thead{bounds we prove\\using symmetries}&\thead{weaker bounds we can\\prove without symmetries}\\
			\hline
			$\mathcal{A}$&$\sfrac98=1.125$&$\sfrac{57}{50}=1.14$&$\sfrac{135}{119}=1.134\dots$\\
			\hline 
			$\mathcal{A}^*$&$\sfrac{33}{29}=1.137\dots$&$\sfrac{52}{45}=1.1\overline{5}$&$\sfrac{33}{29}=1.137\dots$\\
			\hline
			$\mathcal{F}$&$\sfrac98=1.125$&$\sfrac{17}{15}=1.1\overline{3}$&$\sfrac{26}{23}=1.130\dots$\\
			\hline
			$\mathcal{F}^*$&$\sfrac{42}{37}=1.\overline{135}$&$\sfrac87=1.142\dots$& $\sfrac{42}{37}=1.\overline{135}$\\
			\hline
			$\widehat{\mathcal{F}}$&$\sfrac{42}{37}=1.\overline{135}$&$\sfrac{23}{20}=1.15$&$\sfrac{42}{37}=1.\overline{135}$\\ 
			\hline
			$\mathcal{Q}$&$\sfrac98=1.125$&$\sfrac{17}{15}=1.1\overline{3}$&$\sfrac{17}{15}=1.1\overline{3}$\\ 
			\hline
			$\mathcal{Q}^*$&$\sfrac{33}{29}=1.137\dots$&$\sfrac87=1.142\dots$&$\sfrac{33}{29}=1.137\dots$\\
			\hline
		\end{tabular}	
		\smallskip
		\caption{The access structures of which we have improved lower bounds on the information ratio.}\label{table-sss-results}
	\end{table}
	
	\noindent
	notation from \cite{farras2020improving}).
	
	We focus on the access structure $\mathcal{V}$ and a few access structures whose study was initiated in \cite{farras2020improving}. All these access structures have some nice geometric interpretation, so do the matroids of which they are ports. 
	In \cite{farras2020improving}, they are named after the matroids $AG(3,2)',F_8$ and $Q_8$ from the appendix of \cite{oxley1992matroid}, from which they are derived. We follow their notation. As usual, each access structure can be defined by its minimal authorized coalitions, see Table~\ref{table:acc_str_stu}.
	
	\begin{table}[ht]
		\centering
		\begin{tabular}{ |c|c| }
			\hline
			\thead{Acccess\\Structure} & \thead{List of Minimal\\Authorized Sets} \\
			\hline
			$\mathcal{V}$ & \makecell{123, 145, 1246, 1247, 1256, 1257, 1267,\\1346, 1347, 1356, 1357, 1367, 1467,\\ 1567, 2346, 2347, 2356, 2357, 2456, 2457,\\2467, 2567, 3456, 3457, 3467, 3567} \\
			\hline
			$\mathcal{A}$ & \makecell{123, 145, 167, 246,\\ 257, 347, 356, 1247} \\
			\hline
			$\mathcal{A}^*$ & \makecell{123, 145, 167, 246, 257,\\ 347, 1356, 2356, 3456, 3567} \\
			\hline
			$\mathcal{F}$ & \makecell{123, 145, 167, 246, 257,\\ 347, 356, 1247, 1256} \\
			\hline
			$\mathcal{F}^*$ & \makecell{123, 145, 167, 246, 257, 1347, 1356,\\ 2347, 2356, 3456, 3457, 3467, 3567} \\
			\hline
			$\widehat{\mathcal{F}}$ & \makecell{123, 145, 167, 246, 257, 347,\\ 1256, 1356, 2356, 3456, 3567} \\
			\hline
			$\mathcal{Q}$ & \makecell{123, 145, 167, 246, 257, 347,\\ 1247, 1256, 1356, 2356, 3456, 3567} \\
			\hline
			$\mathcal{Q}^*$ & \makecell{123, 145, 167, 246, 257, 1247, 1347,\\ 1356, 2347, 2356, 3456, 3457, 3467, 3567} \\
			\hline
		\end{tabular}
		\caption{Access structures}\label{table:acc_str_stu}
	\end{table}
	
	The ultimate goal of this line of research is to find the information ratio for each of these access structures (and study the connection of information ratio with the combinatorial properties of matroids). This goal was not achieved in \cite{farras2020improving}, nor is it in our work. However, we take a new step in this direction and improve the known lower bound for the information ratio of these 8 access structures.

	\subsection{Results}\label{SSS:R}
	
	We improve the lower bounds for seven access structures.
	
	\begin{theorem}\label{thm:spanish-bounds}
		The information ratios of $\mathcal{A}$, $\mathcal{A}^*$, $\mathcal{F}$, $\mathcal{F}^*$, $\widehat{\mathcal{F}}$, $\mathcal{Q}$ and $\mathcal{Q}^*$ are $\sfrac{57}{50}=1.14$, $\sfrac{52}{45}=1.1\overline{5}$, $\sfrac{17}{15}=1.1\overline{3}$, $\sfrac87=1.142\dots$, $\sfrac{23}{20}=1.15$, $\sfrac{17}{15}=1.1\overline{3}$ and $\sfrac87=1.142\dots$ respectively. In the column 3 of Table~\ref{table-sss-results} we show these lower bounds for the information ratio of each access structure.
	\end{theorem}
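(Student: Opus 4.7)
The plan is to establish the seven lower bounds by instantiating Proposition~\ref{p:linear-program-for-ratio} for each access structure in the statement and then solving the resulting linear program with a combination of symmetry reduction, carefully chosen copy-lemma applications, and a solver that produces a rational certificate. The same recipe applies uniformly to $\mathcal{A}$, $\mathcal{A}^*$, $\mathcal{F}$, $\mathcal{F}^*$, $\widehat{\mathcal{F}}$, $\mathcal{Q}$, $\mathcal{Q}^*$; only the concrete access predicates, symmetry groups, and copy-lemma splits differ.

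First, for a fixed access structure $A$ among the seven, I would write down the base constraints of Proposition~\ref{p:linear-program-for-ratio} for the variables $(S_0,S_1,\ldots,S_7)$: the normalization $h_{S_0}=1$; the equalities \eqref{eq:secret-sharing} (namely $h_{S_{J\cup\{0\}}} = h_{S_J}$ for every accepted coalition $J \in \mathcal{A}$, and $h_{S_{J\cup\{0\}}} = h_{S_J}+1$ for every forbidden $J \in \mathcal{B}$, obtained from the accepted sets listed in Table~\ref{table:acc_str_stu}); the elemental Shannon inequalities on $\{S_0,\ldots,S_7\}$; and the bound $x \ge h_{S_i}$ for each participant $i \in \llbracket 1,7\rrbracket$. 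This already yields an LP whose optimum is a valid lower bound for the information ratio of $A$; the point of the argument is to push this optimum up to the claimed rational value by enlarging the program.

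Next, I would apply the copy lemma a handful of times, each time splitting the current variable set into a ``kept'' group and a ``resampled'' group and adding the corresponding entropy equalities of item~(vi) of Proposition~\ref{p:linear-program-for-ratio}, together with the full elemental Shannon inequalities on the enlarged variable set. By the Zhang--Yeung mechanism, this implicitly makes available all non-Shannon-type inequalities on $(S_0,\ldots,S_7)$ derivable from that chain of copies. In parallel, I would invoke Lemma~\ref{l:prop-sym} for the symmetry group $G_A$ of $A$ (the groups are computed in the appendix) and impose $h_{S_I} = h_{S_{\sigma \cdot I}}$ for all $\sigma \in G_A$, $I \subseteq \llbracket 0,7\rrbracket$; I would further choose the copy-lemma splits to be $G_A$-equivariant, so that the symmetry action extends cleanly to the auxiliary variables and the quotient LP remains small. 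The optimum is then fed to a numerical solver and the resulting rational dual certificate witnesses $x \ge r$ for the target value $r$ listed in the theorem. The gap between columns~3 and~4 of Table~\ref{table-sss-results} quantifies how much the symmetrization buys us in each case.

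The main obstacle is the combinatorial blow-up: each copy-lemma application can nearly double the dimension $2^{n+\ell}-1$ of the entropy space, and by Remark~\ref{rem:number-of-inequalities} the number of elemental inequalities grows by a factor of $\Theta(n^2)$ per application, so without aggressive symmetrization the LP quickly becomes intractable. The non-trivial empirical work is therefore (i) to find a short sequence of $G_A$-equivariant copy-lemma splits strong enough to reach the announced rational optima yet small enough to solve, and (ii) to certify that the returned floating-point optimum is in fact the exact rational $r$ by recovering a rational dual feasible solution. A minor additional point is that the entries $\sfrac{17}{15}$ for $\mathcal{F}$ and $\mathcal{Q}$ (and $\sfrac87$ for $\mathcal{F}^*$ and $\mathcal{Q}^*$) coincide, which I would expect to reflect a shared ``bottleneck'' sub-structure common to the two access structures in each pair and might allow a single dual certificate to serve both.
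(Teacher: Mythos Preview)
Your overall plan---instantiate Proposition~\ref{p:linear-program-for-ratio} for each access structure, add a few copy-lemma steps, impose the symmetry constraints, and solve the resulting LP---is exactly what the paper does. But one point in your proposal diverges from the paper in a way that matters.

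You propose to choose the copy-lemma splits to be $G_A$-equivariant ``so that the symmetry action extends cleanly to the auxiliary variables.'' The paper does \emph{not} do this; its copy-lemma applications are deliberately asymmetric with respect to $G_A$. For instance, for $\mathcal{A}$ the paper copies $(S_0,S_3,S_4,S_7)$, yet $G_{\mathcal{A}}=\langle(12)(56),(14)(36),(17)(35)\rangle$ does not stabilize $\{0,3,4,7\}$. The reason this is nonetheless valid is the order of operations in the proof of Proposition~\ref{p:linear-program-for-ratio}: the symmetry constraints are justified via Lemma~\ref{l:prop-sym} at the level of the optimization problem over $\overline{\Gamma_8^*}$, \emph{before} any copy variables exist, and they constrain only the coordinates indexed by subsets of $\llbracket 0,7\rrbracket$. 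The copy-lemma equalities and the Shannon inequalities on the enlarged variable set are then added as a relaxation of the almost-entropic condition and need not respect $G_A$ at all. This is precisely the trick highlighted at the start of Section~\ref{symmetry-section}: the symmetry constraints on the original variables recover much of the effect of symmetrizing the copy steps without paying the cost of actually performing all the symmetric copies. Restricting yourself to equivariant splits is unnecessary and may well prevent you from reaching the stated optima with only one or two copy steps.

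The other gap is concreteness. The paper's proof is not a search procedure but an explicit list: for each of the seven structures it names the one or two copy-lemma applications (e.g.\ for $\mathcal{A}^*$, first $(S_0',S_3')$ as an $(S_5,S_6)$-copy of $(S_0,S_3)$, then $(S_1'',S_2'')$ as an $(S_0,S_0',S_3,S_3')$-copy of $(S_1,S_2)$) which, combined with the symmetry constraints on $S_0,\ldots,S_7$ only, make the LP optimum equal to the claimed rational. Your proposal leaves this as a search to be carried out; the paper reports its outcome.
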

	
	\begin{proof}
		For each of the seven access structures we construct a linear program as explained in Proposition~\ref{p:linear-program-for-ratio}. In this linear program we use auxiliary random variables with one or two applications of the copy lemma inspired from the applications of the AK lemma in \cite{farras2020improving}. We also add the constraints to express for each access structure the symmetry conditions (see Section~\ref{s:sym} and the appendix).
		
		We use the following applications of the copy lemma to create four additional variables for each access structure. To denote a copy of $X$, we use $X'$ in the first application of the copy lemma, $X''$ in the second copy step etc. and $X^{(i)}$ in the $i^{th}$:
		
		\begin{itemize}	
			\item $\mathcal{A}$: we introduce new variables $(S_0',S_3',S_4',S_7')$ as a copy of $(S_0,S_3,S_4,S_7)$.
			\item $\mathcal{A}^*$: we introduce $(S_0',S_3')$ as a $(S_5,S_6)$-copy of $(S_0,S_3)$ and then another pair $(S_1'',S_2'')$  as a $(S_0,S_0',S_3,S_3')$-copy of $(S_1,S_2)$.
			\item $\mathcal{F}$: we introduce new variables $(S_0',S_2',S_4',S_6')$ as a copy of $(S_0,S_2,S_4,S_6)$.
			\item $\mathcal{F}^*$: we introduce $(S_0',S_4')$ as a $(S_3,S_7)$-copy of $(S_0,S_4)$ and then $(S_1'',S_4'')$ as a $(S_0,S_0',S_4',S_5)$-copy of $(S_1,S_4)$.
			\item $\widehat{\mathcal{F}}$: we introduce $(S_0',S_4')$ as a $(S_2,S_6)$-copy of $(S_0,S_4)$ and then $(S_1'',S_4'')$ as a $(S_0,S_0',S_4',S_5)$-copy of $(S_1,S_4)$.
			\item $\mathcal{Q}$: we introduce $(T',V')$ as a $(S_0,S_2,S_4,S_6)$-copy of $(T,V)$ over $(S_1,S_3,S_5,S_7)$ and then $(T'',V'')$ as a  $(S_0,S_2,S_4,S_6,T',V')$-copy of $(T,V)$ over 
			$(S_1,S_3,S_5,S_7)$, where $T=(S_0,S_4)$ and $V=(S_2,S_6)$.
			\item $\mathcal{Q}^*$: we introduce a $(S_0',S_4')$ as a $(S_3,S_7)$-copy of $S_0,S_4$ and $(S_1'',S_5'')$ as a $(S_0,S_0',S_4,S_4')$-copy of $(S_1,S_5)$.
		\end{itemize}
		
		For comparison, in column 4 of Table~\ref{table-sss-results}, we show the weaker bound (strictly except for $Q$) that can be proven with the same use of the copy lemma but without symmetry conditions.
	\end{proof}
	
	\section{Hat-Guessing Games}\label{Ch:HGG}
	
	In this section we first describe in detail the lower and upper bound methods from \cite{christofides2011guessing} and \cite{baber2016graph} and then show our results. Our main result is on an undirected graph called $R^-$ with 10 vertices. The best known lower bound on its guessing number is $\sfrac{20}3=6.\bar{6}$ and the previously known best upper bound was $\sfrac{59767}{8929}=6.693\ldots$. We get an upper bound $\sfrac{1847}{276}=6.6920\ldots$ ($\approxeq6.692028986$).
	Our other results is to use our tools to get another proof of a known lower bound (for $R^L$) from \cite{baber2016graph}.
	
	\subsection{Asymptotic Guessing Number And Bounds Via Graphs}\label{s:asy-gue-num}
	
	In this section we briefly summarize the known methods of lower and upper bounds for the guessing number. The method suggested by Christofides and Markstr\"om in \cite{christofides2011guessing} to compute a lower bound for the guessing number is as follows. We first restrict ourselves to undirected graphs. (Note that we identify undirected edges and pairs of edges between the same vertices in both direction in directed graphs, hence we see undirected graphs as a subclass of directed graphs.) To state the technique, the following definition is necessary.
	
	\begin{definition}[Clique Cover Number, Definition~2.5 in \cite{christofides2011guessing}]
		A \emph{clique cover}, or \emph{clique partition} of an undirected graph $G$ with vertex set $V$ is a partition of $V$ into disjoint cliques. The \emph{clique cover number} $cp(G)$ of $G$ is the minimum number of (vertex disjoint) cliques into which $G$ can be partitioned.
	\end{definition}	
	
	\begin{figure}[h]
		\centering
		\begin{tikzpicture}[mystyle/.style={draw,shape=circle,fill=white, inner sep=0pt, minimum size=4pt, label={[anchor=center, label distance=2mm](90+360/5*(#1-1)):#1}}]
			\node[draw, regular polygon,regular polygon sides=5,minimum size=3cm] (p) {};
			\foreach\x in {1,...,4}{
				\node[mystyle=\x] (p\x) at (p.corner \x){};	
			}
			\node[draw,shape=circle,fill=black, inner sep=0pt, minimum size=4pt, label={[anchor=center, label distance=2mm](90+360/5*4):5}] (p5) at (p.corner 5){};
			\draw[line width = 1pt] (p1) -- (p2);
			\draw[line width = 1pt] (p3) -- (p4);
		\end{tikzpicture}
		\caption{The clique cover number of $C_5$ is $3$.}\label{cp(C_5)}
	\end{figure}
	
	The following lemma is inspired from the observation of games on complete graphs. 
	
	\begin{lemma}[\cite{christofides2011guessing}, Lemma 2.6]\label{lem:cp}
		Let $G$ be a graph with $n$ vertices and $s>1$ integer. Then $gn(G,s)\ge n-cp(G)$.
	\end{lemma}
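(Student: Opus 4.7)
The plan is to construct an explicit deterministic strategy whose number of winning configurations is exactly $s^{n - cp(G)}$; by Definition~\ref{defin:gn} this immediately yields the claimed lower bound on $gn(G,s)$. The building block is the classical strategy on a complete graph, which I would first recall: on $K_k$ the players agree before the game on a fixed target value $t \in \llbracket 0, s-1\rrbracket$ and, viewing colours as elements of $\mathbb{Z}/s\mathbb{Z}$, each player $v$ observes the colours of all other players and guesses the unique value making the sum of all colours in the clique equal to $t$ modulo $s$. This strategy succeeds on exactly those configurations whose colour sum is $t \bmod s$, giving $s^{k-1}$ winning configurations out of $s^k$.

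Next, I would lift this to an arbitrary undirected graph $G$ on $n$ vertices by fixing an optimal clique partition $V(G) = C_1 \sqcup C_2 \sqcup \cdots \sqcup C_t$ with $t = cp(G)$. Because each $C_i$ is a clique in $G$, every player in $C_i$ has (undirected) edges to every other player in $C_i$, so within each $C_i$ the complete-graph strategy above is admissible in the sight graph $G$. I would let each clique play independently: fix an arbitrary target $t_i$ for each $C_i$ and have the players in $C_i$ apply the $K_{|C_i|}$-strategy restricted to $C_i$, ignoring what they see outside their own clique.

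The count is then immediate by independence: a configuration in $\llbracket 1,s\rrbracket^V$ is winning if and only if its restriction to each $C_i$ is winning for the local strategy on $C_i$. The number of such restrictions is $s^{|C_i|-1}$, so the total number of winning configurations is
\begin{equation*}
\prod_{i=1}^{t} s^{|C_i|-1} \;=\; s^{\sum_{i=1}^{t}(|C_i|-1)} \;=\; s^{n-t} \;=\; s^{n-cp(G)}.
\end{equation*}
By Definition~\ref{defin:gn}, $gn(G,s) \ge \log_s(s^{n-cp(G)}) = n - cp(G)$.

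There is no real obstacle here; the only point that deserves a sentence of care is verifying that the local strategies on the cliques are compatible with the sight graph of $G$. Since the $C_i$ are cliques \emph{in $G$}, the undirected edges required by each local strategy are present in $G$, and since the partition is vertex-disjoint the guessing functions combine into a single well-defined strategy $\mathcal{F}$ on $G$. This is where the hypothesis that $G$ is undirected (so that each clique edge provides two-way visibility) enters the argument.
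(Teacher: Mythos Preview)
Your proof is correct and is exactly the standard argument. The paper itself does not give a proof of this lemma---it merely cites \cite{christofides2011guessing} and remarks that the lemma ``is inspired from the observation of games on complete graphs''---and your construction is precisely that observation carried out in detail. One cosmetic point: you use $t$ both for the clique-sum target and for $cp(G)$; renaming one of them would avoid the clash.
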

		
	Christofides and Markst\"om also show that $gn(G)\le n-\alpha(G)$ where $\alpha(G)$ denotes the size of the largest independent set of vertices of $G$. Indeed this is straightforward as independent set of vertices is acyclic. One can put all the hats other than the players' on the independent set first. This already fixes their guesses. So the guesses are independent of their hat colours, for which there are $s^{\alpha(G)}$ configurations. Hence the probability of success is bounded from above by $s^{-\alpha(G)}$.
	
	The upper bound via $\alpha(G)$ along with the lower bound via $cp(G)$ already settles down the guessing number for large classes of graphs, such as perfect graphs for which these two numbers are equal. The smallest non-perfect graphs is $C_5$, hence we illustrate it in Figure~\ref{cp(C_5)}.
	
	Another important result of \cite{christofides2011guessing} is the generalization of this technique by removing $s$ from the formulas and replacing clique cover by \emph{fractional clique cover} (see below).
	
	The guessing number as defined in Definition~\ref{defin:gn} may depend on the size $s$ of the colour alphabet $A_s$ (for example for $C_5$ it does) as shown in \cite{christofides2011guessing}. This motivates the next definition.
	
	\begin{definition}[Asymptotic Guessing Number, Theorem~3.6 and Definition~3.7 in \cite{christofides2011guessing}]
		The following limit exists.
		$$\lim_{s\to\infty}gn(G,s)$$
		It is called the \emph{asymptotic guessing number} of $G$ and noted $gn(G)$. In particular, it is equal to $\sup_{s\ge2}gn(G,s)$.
	\end{definition} 
	
	\begin{definition}
		Let $G=(V,E)$ be a graph and $t\ge 2$ be an integer. A \emph{$t$-blow up} of $G$ is a graph $(V',E')$ where $V'=V\times\llbracket1,t\rrbracket$ and $E'=\{((u,i),(v,j))\mid(u,v)\in E\text{ and }i,j\in\llbracket1,y\rrbracket\}$.
		The $t$-blow-up of $G$ is noted $G(t)$.
	\end{definition}
	
	\begin{lemma}[\cite{baber2016graph}, Lemma III.2]\label{lem:blow-up}
		Let $G=(V,E)$ be a graph and $t,s\ge2$ integers. Then $$t\cdot gn(G,s^t)=gn(G(t),s).$$ 
	\end{lemma}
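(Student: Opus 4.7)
The plan is to exhibit a bijective correspondence between strategies for the $(G, s^t)$-game and strategies for the $(G(t), s)$-game that preserves the set of winning configurations under the natural identification $\llbracket 1, s^t \rrbracket^V \leftrightarrow \llbracket 1, s \rrbracket^{V \times \llbracket 1, t \rrbracket}$ coming from $\llbracket 1, s^t \rrbracket \cong \llbracket 1, s \rrbracket^t$. The equality of guessing numbers then falls out by taking a logarithm.

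The first step I would record is a purely combinatorial observation about neighborhoods in the blow-up. Since $G$ is loopless, $(v, v) \notin E$, so the $t$ copies of any single vertex $v$ are pairwise non-adjacent in $G(t)$; on the other hand, every copy of every neighbour of $v$ is adjacent to every copy of $v$. Hence the (in-)neighbourhood of any vertex $(v, i)$ in $G(t)$ is exactly $N_G(v) \times \llbracket 1, t \rrbracket$. Under the coordinate-wise bijection $\llbracket 1, s^t \rrbracket \cong \llbracket 1, s \rrbracket^t$, the information available to player $v$ in $(G, s^t)$ (one $\llbracket 1, s^t \rrbracket$-valued hat per neighbour of $v$) is in natural bijection with the information available to each player $(v, i)$ in $(G(t), s)$ (one $\llbracket 1, s \rrbracket$-valued hat per element of $N_G(v) \times \llbracket 1, t \rrbracket$).

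Next I would translate strategies in both directions. Given $\mathcal{F}' = (f'_{(v, i)})$ for $(G(t), s)$, define $\mathcal{F} = (f_v)$ for $(G, s^t)$ by letting $f_v$ on input $(c_u)_{u \in N_G(v)} \in \llbracket 1, s^t \rrbracket^{N_G(v)}$ reinterpret that input as an element of $\llbracket 1, s \rrbracket^{N_G(v) \times \llbracket 1, t \rrbracket}$ and output the tuple $\bigl(f'_{(v,1)}(\cdot), \ldots, f'_{(v,t)}(\cdot)\bigr) \in \llbracket 1, s \rrbracket^t \cong \llbracket 1, s^t \rrbracket$. Conversely, given $\mathcal{F}$, define $f'_{(v, i)}$ to be the $i$-th $\llbracket 1, s \rrbracket$-coordinate of $f_v$. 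These two constructions are manifestly mutually inverse. The one point to verify — essentially a tautology once the bijections are written out — is that $\mathcal{F}$ guesses correctly at $v$ on a configuration $c$ if and only if $\mathcal{F}'$ guesses correctly at every copy $(v, i)$ on the corresponding $c'$; taking the conjunction over $v$ yields that $\mathcal{F}$ wins on $c$ iff $\mathcal{F}'$ wins on $c'$.

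It follows that the bijection on strategies restricts to a size-preserving bijection between their sets of winning configurations, so $\max_{\mathcal{F}} |W(\mathcal{F})| = \max_{\mathcal{F}'} |W(\mathcal{F}')|$. Applying $\log_s$ and using the identity $t \log_{s^t}(\cdot) = \log_s(\cdot)$ together with Definition~\ref{defin:gn} yields $t \cdot gn(G, s^t) = t \log_{s^t} \max_{\mathcal{F}} |W(\mathcal{F})| = \log_s \max_{\mathcal{F}'} |W(\mathcal{F}')| = gn(G(t), s)$. There is no substantial obstacle in this argument; the only care required is to keep track of the two distinct uses of $\llbracket 1, s^t \rrbracket \cong \llbracket 1, s \rrbracket^t$ (at the configuration level and at the guess level) and to invoke looplessness precisely at the step where we identify the neighbourhoods in $G(t)$.
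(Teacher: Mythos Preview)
Your proof is correct and matches the approach the paper attributes to \cite{baber2016graph}: a bijection between strategies on $(G(t),s)$ and on $(G,s^t)$, compatible with the natural bijection between colour configurations, under which winning configurations correspond exactly. Your write-up makes explicit the one non-trivial point---that looplessness forces the in-neighbourhood of every $(v,i)$ in $G(t)$ to be $N_G(v)\times\llbracket1,t\rrbracket$, independent of $i$---which is precisely what lets the $t$ guessing functions at the copies of $v$ be bundled into a single $\llbracket1,s^t\rrbracket$-valued guess.
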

	
	\begin{remark}
		In fact, in the proof they define a bijection between the strategies on $(G(t),s)$ and those on $(G,s^t)$.
		The bijection $\psi:\mathcal{F}\mapsto\psi(\mathcal{F})$ that it defines between strategies for two games is compatible with the bijection between the colour configurations. The strategy $\mathcal{F}$ wins on the configuration $(a_v)_{v\in V}$ of hat colours on $(G,s^t)$ if and only if $\psi(\mathcal{F})$ wins on the configuration $\phi((a_v)_{v\in V})$ on $(G(t),s)$.
		The first direction defines $\psi^{-1}$, and the other direction defines $\psi$. (It is not difficult to see that this function is a bijection, so $\psi^{-1}$ uniquely determines $\psi$.) 
	\end{remark}
	
	\begin{definition}[Fractional Clique Cover Number]
		Let $G=(V,E)$ be a graph and $K$ be the set of its cliques.
		A fractional clique cover is a weighting $w:K\to[0,1]$ of cliques such that, for every vertex $v\in V$, the sum of weights of the cliques it belongs to is $1$. Formally $$\forall v\in V,\sum_{\substack{k\in K\\v\in k}}w(k)=1.$$ Among all fractional clique covers, one that minimizes the sum of all weights defines the \emph{fractional clique partition number} of $G$:
		$$cp_f(G)=\min_{w\text{ fractional clique cover}}\sum_{k\in K}w(k)$$
	\end{definition}
	
	Note that the fractional clique cover number of a graph can be calculated using linear programming since all the conditions are expressible as linear inequalities as well as the sum of all the weights to minimize.
	
	\begin{figure}	
		\centering
		\begin{tikzpicture}[mystyle/.style={draw,shape=circle,fill=white, inner sep=0pt, minimum size=4pt, label={[anchor=center, label distance=2mm](90+360/5*(#1-1)):#1}}]
			\node[draw, regular polygon,regular polygon sides=5,minimum size=3cm, line width = 0.7pt] (p) {};
			\foreach\x in {1,...,5}{
				\node[mystyle=\x] (p\x) at (p.corner \x){};	
			}
		\end{tikzpicture}	
		\caption{The fractional clique cover number of $C_5$ is $2.5$}
	\end{figure}
	It is known that the fractional clique cover number of every graph is rational.
	\begin{fact}
		For any graph $G$ and any integer $t\ge1$, we have $cp(G(t))\ge t\cdot cp_f(G)$.
	\end{fact}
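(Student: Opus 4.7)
The plan is to produce, from any clique partition of the blow-up $G(t)$, a fractional clique cover of $G$ whose total weight equals $|\mathcal{K}'|/t$, where $\mathcal{K}'$ is the partition. Applied to a minimum clique partition of $G(t)$, this gives $cp_f(G)\le cp(G(t))/t$, which is exactly the stated inequality.

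The construction uses the projection $\pi\colon V(G(t))\to V(G)$, $(v,i)\mapsto v$. The key observation is that the $t$ copies $(v,1),\ldots,(v,t)$ of a single vertex are pairwise non-adjacent in $G(t)$, because the graphs considered in the paper are loopless so $(v,v)\notin E$. Consequently any clique $k'$ of $G(t)$ contains at most one copy of each vertex of $G$, so $\pi$ restricted to $k'$ is injective and $\pi(k')$ is a clique of $G$ of the same size as $k'$.

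Given a minimum clique partition $\mathcal{K}'$ of $G(t)$, I would define the weighting
$$w(k)=\frac{1}{t}\,\bigl|\{k'\in\mathcal{K}' : \pi(k')=k\}\bigr|$$
on cliques $k$ of $G$. For any fixed $v\in V(G)$, the $t$ copies $(v,1),\ldots,(v,t)$ are distributed among exactly $t$ distinct members of $\mathcal{K}'$ (one per copy, all distinct by the injectivity observation), and each such member contributes $v$ to its projection. Summing over cliques of $G$ that contain $v$ gives $t\cdot\tfrac{1}{t}=1$, so $w$ is a valid fractional clique cover of $G$. Its total weight is $|\mathcal{K}'|/t=cp(G(t))/t$, hence $cp_f(G)\le cp(G(t))/t$ by the definition of $cp_f$.

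There is essentially no hard step here: once one notices that cliques of $G(t)$ project injectively to cliques of $G$, the weighting and its verification are routine. The side condition $w(k)\le 1$ required by the definition of a fractional clique cover holds because each $k'$ with $\pi(k')=k$ uses exactly $|k|$ of the $t|k|$ vertices of $k\times\llbracket 1,t\rrbracket$ and the $k'$'s are disjoint, so at most $t$ such $k'$ can appear.
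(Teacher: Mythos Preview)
Your proof is correct and follows essentially the same approach as the paper: project a clique partition of $G(t)$ to $G$ via $(v,i)\mapsto v$ and assign each clique $k$ of $G$ weight $1/t$ times the number of cliques in the partition that project to $k$. Your write-up is in fact more careful than the paper's, since you explicitly justify that $\pi$ is injective on cliques (using that the $t$ copies of a vertex are pairwise non-adjacent) and verify the side condition $w(k)\le 1$.
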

	\begin{proof}
		Consider a clique cover of $G(t)$ with weights $w(k')$ for cliques $k'\in K(G(t))$. Identifying vertices $(v,i)_{i\in\llbracket1,t\rrbracket}$ of $G(t)$ with the vertex $v$ of $G$ partitions $K(G(t))$ into equivalence classes (the cliques $k'\in K(G(t))$ identified with the same clique $k\in K(G)$ are equivalent). This induces a fractional clique cover of $G$ by taking as weight of $k\in K(G)$ $$\frac1t\sum_{\substack{k'\in K(G(t))\\k'\text{ identified with }k}}w(k').$$
	\end{proof}
	\begin{fact}
		For every $G$, there exists an integer $t$ such that
		$cp(G(t))\le t\cdot cp_f(G)$.
	\end{fact}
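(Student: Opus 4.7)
The plan is to exploit the fact that the fractional clique cover number is the value of a linear program with rational coefficients, hence admits a rational optimal solution, and then to ``unroll'' the rational weights into an integer cover of a suitably large blow-up.

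Concretely, first I would take an optimal fractional clique cover $w \colon K \to [0,1]$ of $G$ with rational values (such a solution exists because $cp_f(G)$ is the optimum of an LP with integer data). Let $t$ be a common denominator of the non-zero weights, so that $a_k := t \cdot w(k)$ is a non-negative integer for every clique $k \in K$. By definition of a fractional clique cover, for every vertex $v \in V$ we have $\sum_{k \ni v} a_k = t$, and the total weight satisfies $\sum_{k \in K} a_k = t \cdot cp_f(G)$.

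Next, for each vertex $v \in V$, I would use this identity $\sum_{k \ni v} a_k = t$ to partition the set of $t$ copies $\{(v,1),\ldots,(v,t)\}$ of $v$ in $G(t)$ into blocks $B_{v,k}$, one block per clique $k$ containing $v$, with $|B_{v,k}| = a_k$. Then, for each clique $k \in K$ with $a_k > 0$, I would pick arbitrary bijections between $B_{v,k}$ and $\llbracket 1, a_k \rrbracket$ for each $v \in k$. These bijections allow me to define $a_k$ disjoint cliques $k^{(1)}, \ldots, k^{(a_k)}$ in $G(t)$ by setting $k^{(\ell)} = \{(v, i_\ell^v) : v \in k\}$, where $i_\ell^v$ is the element of $B_{v,k}$ sent to $\ell$. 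Each $k^{(\ell)}$ is a clique in $G(t)$ because any two vertices $u,v$ in $k$ are adjacent in $G$, so by definition of the blow-up every pair $(u,i), (v,j)$ with $i,j \in \llbracket 1, t \rrbracket$ is adjacent in $G(t)$.

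Finally I would check that the collection $\{k^{(\ell)} : k \in K,\ 1 \le \ell \le a_k\}$ partitions $V(G(t))$: for each vertex $v$ and each copy $(v,i)$, there is a unique clique $k \ni v$ with $i \in B_{v,k}$ (because the blocks $B_{v,k}$ partition $\{1, \ldots, t\}$), and then a unique $\ell$ such that the bijection sends $i$ to $\ell$, so $(v,i)$ lies in the unique clique $k^{(\ell)}$. The total number of cliques used is $\sum_{k \in K} a_k = t \cdot cp_f(G)$, which yields $cp(G(t)) \le t \cdot cp_f(G)$, as claimed. The only substantive obstacle is the rationality of the optimal fractional clique cover, which is a standard LP fact; the rest is combinatorial bookkeeping.
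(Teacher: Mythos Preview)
Your proposal is correct and follows essentially the same approach as the paper: take an optimal rational fractional clique cover, choose $t$ to clear denominators so that the scaled weights $a_k = t\cdot w(k)$ are integers, and then lift $a_k$ copies of each clique $k$ into $G(t)$ to obtain a genuine clique partition of size $\sum_k a_k = t\cdot cp_f(G)$. Your write-up is in fact more explicit than the paper's about how to organise the copies of each vertex into the lifted cliques (via the blocks $B_{v,k}$ and bijections), which makes the verification that one gets a partition of $V(G(t))$ cleaner.
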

	\begin{proof}
		Consider an optimal fractional clique cover of $G$, the sum of the weights of which is $cp_f(G)$. Express the weights $w(k),\ k\in K(G)$ using co-prime numerators $a_k$ and denominators $b_k$ (for each weight $w(k)=\frac{a_k}{b_k}$, $gcd(a_k,b_k)=1$). Let $t$ be the least common multiple of all these denominators $b_k,\ k\in K(G)$. Then we claim that $cp(G(t))\le t\cdot cp_f(G)$
		Let us now prove this inequality. By our choice of $t$, $t\cdot w(k)$ is integer for every $k\in K(G)$. For every clique $k\in G$ we can cover $t\cdot w(k)$ many cliques in $G(t)$ that are identified with $k$. For any clique of $G(t)$, the vertices $(v,i)$ are pairwise different in the second coordinate, and the clique cover defined this way will cover exactly $t$ vertices $(v,1),\ldots (v,t)$ of $G(t)$ for every vertex $v$ of $G$ as $\sum_{k\in K(G)}t\cdot w(k)=t$. Hence this clique cover is possible and optimal.
	\end{proof}
	
	\begin{figure}	
		\centering
		\begin{tikzpicture}[mystyle/.style 2 args={draw,shape=circle,fill=white, inner sep=0pt, minimum size=4pt, label={[anchor=center, label distance=2mm](90+360/5*(#1-1)):{(#1,#2)}}}]
			\node[draw, regular polygon,regular polygon sides=5,minimum size=3cm] (p1) {};
			\node[draw, regular polygon,regular polygon sides=5,minimum size=6cm] (p2) {};	
			\foreach\x in {1,...,5}{
				\node[mystyle={\x}{1}] (p1\x) at (p1.corner \x){};	
				\node[mystyle={\x}{2}] (p2\x) at (p2.corner \x){};
			}
			\draw[-] (p11) -- (p22);
			\draw[-] (p12) -- (p23);
			\draw[-] (p13) -- (p24);
			\draw[-] (p14) -- (p25);
			\draw[-,line width=1pt] (p15) -- (p21);
			\draw[-] (p11) -- (p25);
			\draw[-] (p12) -- (p21);
			\draw[-] (p13) -- (p22);
			\draw[-] (p14) -- (p23);
			\draw[-] (p15) -- (p24);
			\draw[-,line width=1pt] (p11) -- (p12);
			\draw[-,line width=1pt] (p13) -- (p14);
			\draw[-,line width=1pt] (p22) -- (p23);
			\draw[-,line width=1pt] (p24) -- (p25);
		\end{tikzpicture}	
		\caption{The $2$-blow-up $C_5(2)$ of $C_5$ has clique cover number $5$.}
	\end{figure}
	
	Now we can state the main theorem of \cite{christofides2011guessing}.
	
	\begin{theorem}[\cite{christofides2011guessing},Theorem 4.9]\label{fractionallb}
		Let $G$ be a graph with n vertices. Let $t$ be such that $cp_f(G)=\sfrac{cp(G(t))}t$ and $s\ge2$ such that $s^{1/t}$ is integer. Then $gn(G,s)\ge n-cp_f(G)$ and thus the asymptotic guessing number has the following lower bound.
		$$gn(G)\ge n-cp_f(G)$$
	\end{theorem}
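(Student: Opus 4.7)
The plan is to combine the three building blocks just established: Lemma~\ref{lem:cp} (the integer clique-cover bound $gn(G,s)\ge n-cp(G)$), Lemma~\ref{lem:blow-up} (the blow-up identity $t\cdot gn(G,s^t)=gn(G(t),s)$), and the preceding fact guaranteeing the existence of a $t$ with $cp(G(t))\le t\cdot cp_f(G)$. The hypothesis of the theorem selects such a $t$ and promotes the inequality to the equality $cp_f(G)=cp(G(t))/t$; the hypothesis that $s^{1/t}$ is an integer is exactly what is needed to apply Lemma~\ref{lem:blow-up} in the appropriate direction.

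Concretely, I would first rewrite the blow-up identity as
\[
gn(G,s)=\tfrac{1}{t}\,gn\bigl(G(t),\,s^{1/t}\bigr),
\]
which is legitimate because $s^{1/t}\ge 2$ is an integer by assumption. Next I would apply Lemma~\ref{lem:cp} to the graph $G(t)$, which has $nt$ vertices, with colour alphabet of size $s^{1/t}$, obtaining
\[
gn\bigl(G(t),s^{1/t}\bigr)\;\ge\;nt-cp(G(t)).
\]
Substituting into the previous equation and using the hypothesis $cp(G(t))=t\cdot cp_f(G)$ yields
\[
gn(G,s)\;\ge\;\frac{nt-cp(G(t))}{t}\;=\;n-\frac{cp(G(t))}{t}\;=\;n-cp_f(G),
\]
which is the desired bound.

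For the statement about the asymptotic guessing number, I would invoke the identity $gn(G)=\sup_{s\ge 2}gn(G,s)$ from the definition of the asymptotic guessing number. Since integers of the form $s=m^t$ with $m\ge 2$ satisfy the hypothesis (e.g.\ $s=2^t$ makes $s^{1/t}=2$), the bound $n-cp_f(G)$ is attained by $gn(G,s)$ for at least one admissible $s$, so it is a lower bound for the supremum and hence for $gn(G)$.

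I do not anticipate a genuine obstacle: the proof is an unpacking of the blow-up identity combined with the integer bound. The only point requiring mild care is ensuring an admissible $s$ exists, i.e.\ that the hypothesis ``$s^{1/t}$ is an integer'' is not vacuous, but this is handled by the explicit choice $s=m^t$. The strength of the theorem really sits in the two preceding lemmas and in the fact that $cp(G(t))/t$ converges to $cp_f(G)$ for some $t$; the present statement is the clean synthesis of these pieces.
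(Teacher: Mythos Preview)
Your proof is correct and is exactly the synthesis of Lemma~\ref{lem:cp}, Lemma~\ref{lem:blow-up}, and the two facts on $cp(G(t))$ versus $t\cdot cp_f(G)$ that the paper lines up immediately before the theorem. The paper itself does not write out a proof (the result is quoted from \cite{christofides2011guessing}), but your argument is the intended one given those building blocks; there is nothing to add.
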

	
	\subsection{Upper Bounds On Guessing Number Via Entropy}\label{s:upp-bou-gue}
	
	In \cite{christofides2011guessing}, upper bounds on the guessing numbers of some graphs are proven with the help of Shannon-type information inequalities (the proofs are conventional, without use of computer). In \cite{baber2016graph} this method was extended: the authors explicitly used the formalism of linear programming and the assistance of a computer; these proofs involved non-Shannon-type information inequalities. In what follows we explain this technique.
	
	Consider a guessing game on $(G,s)$. Let us define jointly distributed random variables $(X_v)_{v\in V}$ associated with the vertices of the graph. Each random variable represents the hat colour of the player at vertex $v$. Let $\mathcal{F}$ be an optimal strategy on $(G,s)$. Instead of considering the independent uniformly random distribution for the colour of each hat, we consider the uniform distributions over all the configurations on which $\mathcal{F}$ wins. In other words, the colour configurations on which $\mathcal{F}$ loses all have probability $0$, and those on which $\mathcal{F}$ win are all equiprobable. Two things are special about this distribution. First of all, the entropy $H_s((X_v)_{v\in V})$ in base $s$ (using $\log_s$ instead of $\log_2$ in the definition of entropy) of all the variables is the logarithm of the cardinality of the set on which $\mathcal{F}$ wins, i.e. the guessing number by Definition~\ref{defin:gn}. Secondly, in this distributions the colours that are guessed are the same as the actual colours, hence the hat colour of a player is determined by the colours of the hats they see, therefore, $H(X_v|(X_u)_{u\in\rightarrow(v)})=0$\footnote{In the rest of the section, we do not use the index $s$ for entropy.}.
	
	Note that since the base of the logarithm $\log_s$ and the size of the alphabet $A_s$ (which is the image set of the random variables) are the same, the entropy of a random variable is bounded from above by $1$, i.e. $H(X_v)\le1,\ v\in V$.
	
	Now we can write an optimization problem and obtain via linear relaxation a linear program to upper bound the guessing number.
	
	\begin{proposition}[As discussed after Theorem~V.1 in \cite{baber2016graph}]\label{prop:lp-gg}
		Let $G$ be a graph, let us define a random variable $X_v$ for every vertex $v\in V$ as described above, then the optimization problem over random variables (and therefore, their entropies) below gives an upper bound on the guessing number $gn(G,s)$ for any $s\ge2$, hence for the asymptotic guessing number $gn(G)$.
		
		$$
		\begin{array}{l}
			\texttt{Maximize }\mathtt{H((X_v)_{v\in V})}\\
			\texttt{subject to:}\\
			\mathtt{H(X_v)\le 1}\\
			\mathtt{H(X_v|(X_u)_{u\in\rightarrow(v)})=0}
		\end{array}
		$$
		
		The linear program that we obtain by linear relaxation of this problem (we can add to the list of constraints of this linear program any universally true information inequalities for $(X_v)_{v\in V}$) also gives an upper bound on the asymptotic guessing number.
	\end{proposition}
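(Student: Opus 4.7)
The plan is to exhibit, for every $s \ge 2$, a joint distribution of $(X_v)_{v \in V}$ whose entropy vector (in base $s$) is feasible for the displayed optimization problem and whose objective value equals $gn(G,s)$. This witnesses that the supremum of the optimization problem, and a fortiori that of any linear relaxation, is at least $gn(G,s)$, so any upper bound one proves on that supremum bounds $gn(G,s)$ from above; by taking a supremum over $s$ the same bound applies to $gn(G) = \sup_{s\ge 2} gn(G,s)$.

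Concretely, I fix an optimal deterministic strategy $\mathcal{F} = (f_v)_{v \in V}$ for the $(G,s)$-game, let $W \subseteq A_s^V$ be the set of colour configurations on which $\mathcal{F}$ wins, and define $(X_v)_{v \in V}$ as the coordinates of a sample drawn uniformly from $W$. With entropy measured in base $s$, the second form of Definition~\ref{defin:gn} gives $H((X_v)_{v \in V}) = \log_s |W| = gn(G,s)$. The marginal bound $H(X_v) \le 1$ follows because $X_v$ takes values in $A_s$, a set of $s$ elements. For the conditional constraint, every realisation of $(X_v)_{v \in V}$ lies in $W$, so by definition of winning the guess $f_v((X_u)_{u \in \rightarrow(v)})$ equals $X_v$; hence $X_v$ is a deterministic function of its in-neighbourhood, which forces $H(X_v \mid (X_u)_{u \in \rightarrow(v)}) = 0$.

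The displayed constraints and objective, once entropy is taken in base $s$, do not otherwise depend on $s$. Consequently any number produced by solving the same linear relaxation upper bounds $gn(G,s)$ simultaneously for every $s \ge 2$, and therefore also bounds the asymptotic guessing number. For the linear relaxation itself I observe that any universally true information inequality, including the non-Shannon-type inequalities obtained by applying the copy lemma to auxiliary variables, is by definition satisfied by the entropy vector of $(X_v)_{v \in V}$ (for the copy-lemma extensions, item~1 of the Copy Lemma guarantees the existence of the auxiliary random variables needed to instantiate the corresponding equalities). Adding such inequalities as constraints therefore preserves the feasibility of the witness distribution, so the resulting linear program still yields an upper bound.

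I expect the main subtlety to lie in the normalisation of entropies: the constraint $H(X_v) \le 1$ is what makes the program $s$-invariant and enables one fixed linear program to bound $gn(G,s)$ uniformly in $s$ and hence $gn(G)$. The rest of the argument is a direct unpacking of definitions; the heart of the proof is the construction of the witness distribution from an optimal strategy, which is both the key step and the only place where the combinatorial meaning of the guessing game enters.
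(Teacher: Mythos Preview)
Your proof is correct and follows essentially the same approach as the paper: you construct the witness distribution by taking the uniform law on winning configurations of an optimal strategy, verify the two constraints exactly as the paper does (range bound for $H(X_v)\le 1$, determinism of $f_v$ for the conditional constraint), and use the base-$s$ normalisation to make the bound uniform in $s$. The only addition is your explicit remark that copy-lemma auxiliaries exist for the witness distribution, which the paper leaves implicit.
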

	
	In \cite{christofides2011guessing} it was conjectured (Conjecture~6.4) that the asymptotic guessing number of a graph $G$ with $n$ vertices is always equal to $n-cp_f(G)$.\footnote{The authors use the notation $\chi_f(\overline{G})$ instead of $vp_f(G)$, since the chromatic number of the complement is clique cover number of $G$.} The authors of \cite{baber2016graph} wanted to check this conjecture for graphs with small number of vertices. They used the method of Proposition~\ref{prop:lp-gg} firstly with only Shannon-type inequalities and compared the upper bound given by this method to the lower bound given by the fractional clique cover on all undirected graphs with 9 or less vertices. They found that the bounds match (they performed the verification numerically, using  floating point arithmetic). On graphs with 10 vertices they found only 2 graphs (up to isomorphism) for which the lower and upper bounds do not match, called $R$ and $R^-$. The graph $R$ is given in Figure~\ref{graph-R} and $R^-$ is obtained by removing the edge $(9,10)$ from $R$.
	
	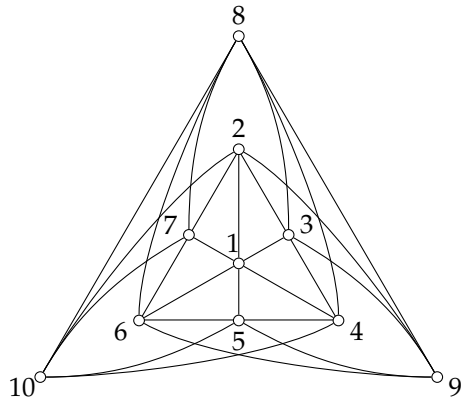
\begin{figure}[h]
		\centering
		\begin{tikzpicture}
			\node[draw, regular polygon,regular polygon sides=3,minimum size=3cm] (petit) {};
			\node[draw, white, regular polygon,regular polygon sides=3,minimum size=6cm] (grand) {};	
			\node[draw,shape=circle,fill=white, inner sep=0pt, minimum size=4pt, label={[anchor=center, label distance=2mm](90):2}] (p2) at (petit.corner 1){};	
			\node[draw,shape=circle,fill=white, inner sep=0pt, minimum size=4pt, label={[anchor=center, label distance=2mm](90-360/3):4}] (p4) at (petit.corner 3){};	
			\node[draw,shape=circle,fill=white, inner sep=0pt, minimum size=4pt, label={[anchor=center, label distance=2mm](90-360/3*2):6}] (p6) at (petit.corner 2){};	
			\node[draw,shape=circle,fill=white, inner sep=0pt, minimum size=4pt, label={[anchor=center, label distance=2mm](30):3}] (p3) at (petit.side 3){};	
			\node[draw,shape=circle,fill=white, inner sep=0pt, minimum size=4pt, label={[anchor=center, label distance=2mm](150):7}] (p7) at (petit.side 1){};	
			\node[draw,shape=circle,fill=white, inner sep=0pt, minimum size=4pt, label={[anchor=center, label distance=2mm](270):5}] (p5) at (petit.side 2){};	
			\node[draw,shape=circle,fill=white, inner sep=0pt, minimum size=4pt, label={[anchor=center, label distance=2mm](105):1}] (p1) at (petit.center){};	
			\node[draw,shape=circle,fill=white, inner sep=0pt, minimum size=4pt, label={[anchor=center, label distance=2mm](90):8}] (p8) at (grand.corner 1){};
			\node[draw,shape=circle,fill=white, inner sep=0pt, minimum size=4pt, label={[anchor=center, label distance=2mm](90-360/3):9}] (p9) at (grand.corner 3){};	
			\node[draw,shape=circle,fill=white, inner sep=0pt, minimum size=4pt, label={[anchor=center, label distance=2mm](90-360/3*2):10}] (p10) at (grand.corner 2){};	
			\draw[-] (p8) -- (p9);
			\draw[-] (p8) -- (p10);
			\draw[-] (p1) -- (p2);
			\draw[-] (p1) -- (p3);
			\draw[-] (p1) -- (p4);
			\draw[-] (p1) -- (p5);
			\draw[-] (p1) -- (p6);
			\draw[-] (p1) -- (p7);
			\draw[-] (p2) .. controls +(330:1) and +(120:1) .. (p9);
			\draw[-] (p2) .. controls +(210:1) and +(60:1) .. (p10);
			\draw[-] (p3) .. controls +(90:1) and +(300:1) .. (p8);
			\draw[-] (p3) .. controls +(330:1) and +(120:1) .. (p9);
			\draw[-] (p4) .. controls +(90:1) and +(300:1) .. (p8);
			\draw[-] (p4) .. controls +(210:1) and +(0:1) .. (p10);
			\draw[-] (p5) .. controls +(330:1) and +(180:1) .. (p9);
			\draw[-] (p5) .. controls +(210:1) and +(0:1) .. (p10);
			\draw[-] (p6) .. controls +(90:1) and +(240:1) .. (p8);
			\draw[-] (p6) .. controls +(330:1) and +(180:1) .. (p9);
			\draw[-] (p7) .. controls +(90:1) and +(240:1) .. (p8);
			\draw[-] (p7) .. controls +(210:1) and +(60:1) .. (p10);
		\end{tikzpicture}
		\caption{The graph $R^-$}\label{graph-R-}
	\end{figure}
	
	The fractional clique cover number for $R$ and $R^-$ are both $\sfrac{10}3=3.\overline{3}$, which implies the lower bound $10-\sfrac{10}3=\sfrac{20}3=6.\overline{6}$ by Theorem~\ref{fractionallb}.
	
	The guessing number of $R$ is proven to be $\frac{27}4=6.75$ in \cite{baber2016graph} by an upper bound using Shannon-type inequalities and the construction of a strategy.
	
	The best upper bound for $R^-$ found in \cite{baber2016graph} using the non-Shannon-type inequalities from \cite{dougherty2011non} is $\sfrac{59767}{8929}=6.693\ldots$. 
	
	In \cite{baber2016graph}, the authors looked for an undirected graph such that the guessing number can be increased by adding one directed edge. They could not find such an example, and this motivated the question whether making a vertex `Superman' (visible by all others) by adding directed edges increases the guessing number. This led to the definition of the graph $R^S$ which is just as $R$ up to three outgoing edges from the vertex $1$ to vertices $8$, $9$ and $10$. The guessing number of $R^S$ is found to be $\sfrac{27}4-6.75$.
	
	Another question on guessing games on graphs: are there any graphs where the guessing number changes when the direction of all of its edges are reversed? This question has been motivated by the connection of guessing games with information networks (and the natural question of reversibility of networks). The authors of \cite{baber2016graph} looked at the candidates $R^S$ and its reverse $R^L$ in which $1$ is a `Luthor' vertex (sees all other vertices). A better lower bound for $R^L$ than its fractional clique cover number is given by the guessing number $\sfrac{27}4$ of $R$. The best upper bound they found on $R^L$ is $\sfrac{359}{53}=6.773\ldots$ using the non-Shannon-type inequalities from \cite{dougherty2011non}.
	
	\subsection{Our Results}\label{s:our-res}
	
	Combining the techniques of linear programming, the copy lemma and symmetries we improve the upper bound on $R^-$ and give an alternative proof of the previously known bound on $R^L$. For both of these graphs, the asymptotic guessing numbers remain unknown.
	
	The symmetry group of $R^-$ is $\langle\sigma=(18)(2\ 10\ 5\ 9)(3746),\tau=(25)(36)(47)\rangle$. The symmetry group of $R^L$ is $\langle(25)(36)(47),(26)(35)(8\ 10),(24)(57)(89)\rangle$. See the appendix for the proof that these are the maximal symmetry groups.
	
	Note that as in the case of secret sharing, we only need to find subgroups of these automorphism groups to apply our argument, and we do not need the whole group. We prove that those we use are indeed whole groups of automorphisms of these graphs to ensure that there is no obvious way to improve the bounds we get using them.
	
	Using these symmetries and the copy lemma we get the following linear programs and upper bounds for $gn(R^-)$ and therefore, improve the upper bound given in \cite{baber2016graph}.
	
	\begin{theorem}\label{th:R-}
		For the above defined graph $R^-$, $$gn(R^-)\le \sfrac{1847}{276}=6.6920\ldots.$$
	\end{theorem}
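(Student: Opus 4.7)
The plan is to instantiate the linear program of Proposition~\ref{p:linear-program-for-number} for the sight graph $R^-$ on $10$ vertices and to solve it to optimality; the claimed value $\sfrac{1847}{276}$ is the optimum of a concrete, finite LP, so the proof reduces to exhibiting and verifying a dual certificate. First I would encode $R^-$ by listing, for every vertex $v$, the in-neighbourhood $\{u:(u,v)\in E\}$. This gives $10$ determinism equalities $h_{X_{\{v\}\cup\rightarrow(v)}}-h_{X_{\rightarrow(v)}}=0$ and $10$ range inequalities $h_{X_v}\le 1$. Combined with the full list of elemental Shannon inequalities on $X_1,\dots,X_{10}$, this yields the Shannon-type upper bound of Christofides--Markstr\"om, which is known to be loose.

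To improve upon it, I would extend the joint distribution by one or two applications of the copy lemma, chosen in the spirit of the non-Shannon-type proofs of \cite{baber2016graph}. A natural first guess is to partition the vertex set into the central $A=\{1\}$ together with the outer triangle $\{8,9,10\}$, and the remaining ``inner'' vertices $B=\{2,3,4,5,6,7\}$, and introduce either (a) a copy $B'$ of $B$ over $A\cup\{8,9,10\}$ followed by a second copy step, or (b) copies of suitable pairs of inner variables over their shared neighbours. The resulting entropies for the extra variables are tied to the original ones by the equalities in item~(vi) of Proposition~\ref{p:linear-program-for-number}, after which one writes down the \emph{Shannon-type} inequalities for the whole extended list of variables, thereby implicitly using non-Shannon-type inequalities on the original ten.

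Then I would impose the symmetry constraints coming from the group $G=\langle\sigma,\tau\rangle$ with $\sigma=(1\,8)(2\,10\,5\,9)(3\,7\,4\,6)$ and $\tau=(2\,5)(3\,6)(4\,7)$ of order $|G|=16$, as justified by Lemma~\ref{l:prop-sym}: set $h_{X_I}=h_{X_{\sigma\cdot I}}$ for all $\sigma\in G$ and all $\varnothing\ne I\subseteq\llbracket 1,n+\ell\rrbracket$ (symmetrising also the applications of the copy lemma so that the symmetry still preserves the feasible set). This collapses the $2^{n+\ell}-1$ entropy variables into one per $G$-orbit of subsets, bringing the LP down to a size where an exact solver can run.

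Finally I would maximise $h_{X_{\{1,\dots,10\}}}$ and extract, from the numerical optimum $\sfrac{1847}{276}\approx 6.6920$, an exact rational dual certificate: non-negative coefficients for the elemental, range, determinism, copy-lemma and symmetry constraints whose weighted sum bounds $h_{X_{\{1,\dots,10\}}}$ from above by $\sfrac{1847}{276}$. Verifying this linear combination in exact arithmetic turns the computer output into a rigorous proof. The hardest part, and the only place where real experimentation is required, is choosing the copy-lemma application: too little copying leaves the bound above the Shannon wall that stops at $\sfrac{59767}{8929}$, while too much copying explodes the LP even after quotienting by $G$. The fact that the denominator $276=4\cdot 69$ is small suggests that a single, carefully chosen copy step (together with the full symmetry reduction) will already be enough to reach $\sfrac{1847}{276}$.
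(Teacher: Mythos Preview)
Your overall framework is right and matches the paper: instantiate Proposition~\ref{p:linear-program-for-number} for $R^-$, extend by the copy lemma, impose the symmetry of $\langle\sigma,\tau\rangle$, solve, and check a rational dual certificate. But three concrete choices in your plan diverge from what actually works.

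First, the paper uses \emph{five} copy steps (seven new variables), not one: (a) $X_2'$ as an $X_3$-copy of $X_2$; (b) $(X_4'',X_5'')$ as an $X_{10}$-copy of $(X_4,X_5)$, then $X_7'''$ as a $(X_4'',X_5,X_{10})$-copy of $X_7$; (c) $(X_6'''',X_7'''')$ as a copy of $(X_6,X_7)$, then $X_8'''''$ as a $(X_7,X_7'''')$-copy of $X_8$. Your inference from the small denominator $276$ that a single copy step suffices is incorrect, and your option (a) of copying all of $\{2,\dots,7\}$ at once would already give $16$ variables before any further step, which is intractable. Second, and this is the key complexity trick you are missing, the paper does \emph{not} write elemental inequalities for all $17$ variables jointly: it writes them only for the three overlapping subsets $\{X_1,\dots,X_{10},X_2'\}$, $\{X_1,\dots,X_{10},X_4'',X_5'',X_7'''\}$, $\{X_1,\dots,X_{10},X_6'''',X_7'''',X_8'''''\}$, each of size $11$ or $13$; there is no inequality linking, say, $X_2'$ and $X_4''$. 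Third, the symmetry constraints are imposed only on the original ten variables, exactly as stated in Proposition~\ref{p:linear-program-for-number}; the copy applications themselves are asymmetric and are \emph{not} symmetrised. Your suggestion to symmetrise the copies would multiply their number by the orbit size and undo the gain. In short, the skeleton of your argument is the paper's, but the flesh (which copies, which inequality sets, where symmetry acts) is where the result lives, and your guesses there would not reach $\sfrac{1847}{276}$.
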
 
	\begin{proof}
		We construct a linear program as in Proposition~\ref{prop:lp-gg} with the following constraints.
		\begin{enumerate}
			\item the following applications of the copy lemma:
			\begin{enumerate}
				\item \label{hgcl-i1} $X_2'$ be a $X_3$-copy of $X_2$;
				\item \label{hgcl-i2} \begin{itemize}
					\item $(X_4'',X_5'')$ be a $X_{10}$-copy of $(X_4,X_5)$ over $X_1,X_2,X_3,X_6,X_7,X_8,X_9$,
					\item and $X_7'''$ be a $(X_4'',X_5,X_{10})$-copy of $X_7$ over $X_1,X_2,X_3,X_4,X_5'',X_6,X_8,X_9$;
				\end{itemize}
				\item \label{hgcl-i3} \begin{itemize}
					\item $(X_6'''',X_7'''')$ be a copy of $(X_6,X_7)$ over $X_1,X_2,X_3,X_4,X_5,X_8,X_9,X_{10}$,
					\item and $X_8'''''$ be a $(X_7,X_7'''')$-copy of $X_8$ over $X_1,X_2,X_3,X_4,X_5,X_6,X_6'''',X_9,X_{10}$
				\end{itemize}
			\end{enumerate}
			\item the elemental inequalities for the following sets of random variables
			\begin{itemize}
				\item those that appear in the copy step in the item~\ref{hgcl-i1} above: 
				
				$X_1,X_2,X_2',X_3,X_4,X_5,X_6,X_7,X_8,X_9,X_{10}$
				\item those that appear in the copy steps of the item~\ref{hgcl-i2} above: 
				
				$X_1,X_2,X_3,X_4,X_4'',X_5,X_5'',X_6,X_7,X_7''',X_8,X_9,X_{10}$
				\item those that appear in the copy steps of the item~\ref{hgcl-i3} above: 
				
				$X_1,X_2,X_3,X_4,X_5,X_6,X_6'''',X_7,X_7'''',X_8,X_8''''',X_9,X_{10}$
			\end{itemize}
			\item the symmetry constraints for the group $\langle(18)(2\ 10\ 5\ 9)(3746),(25)(36)(47)\rangle$ 
		\end{enumerate}
		The optimal value of this linear program is $\le\sfrac{1847}{276}\approxeq6.692028986$, which proves the claim.
	\end{proof}
	
	We confirm the upper bound proven in \cite{baber2016graph}.
	\begin{theorem}
		For the graph $R^L$ defined above, $$gn(R^L)\le\sfrac{359}{53}=6.7735849\ldots.$$
	\end{theorem}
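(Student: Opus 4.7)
The plan is to instantiate the linear-programming scheme of Proposition~\ref{prop:lp-gg} for the sight graph $R^L$, following exactly the same template that worked for $R^-$ in Theorem~\ref{th:R-}. Concretely, I would introduce random variables $X_1,\ldots,X_{10}$ associated with the vertices of $R^L$ and write the ``guessing'' constraints $H(X_v\mid (X_u)_{u\to v})=0$ for each $v$ together with the normalizations $H(X_v)\le 1$. The objective is then $H(X_1,\ldots,X_{10})$, and any feasible upper bound on this value is an upper bound on $gn(R^L)$.

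Next, I would enlarge the variable set by applying the copy lemma (in the entropic form, item~2 of the lemma stated earlier) one or more times, just as in the proof of Theorem~\ref{th:R-}. Each application introduces a fresh copy $Z'$ of a chosen sub-tuple over a chosen ``base'' tuple $X$, contributing the equalities that the marginal on the base together with $Z'$ matches that on the base together with $Z$, together with $H(Z':Y,Z\mid X)=0$. Because the target value $\sfrac{359}{53}$ has already been attained in \cite{baber2016graph} via explicit non-Shannon-type inequalities (in particular those from \cite{dougherty2011non}, which are themselves provable by the copy lemma), it suffices to choose copy steps rich enough to implicitly derive those inequalities on the vertex variables of $R^L$. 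A natural first attempt mirrors the three-step pattern used for $R^-$: one single-variable copy, followed by two double copies, each step chosen so that its base tuple separates the ``Luthor'' vertex $1$ from the outer triangle $\{8,9,10\}$ in a way compatible with the graph's automorphism action.

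To keep the linear program tractable I would then impose the symmetry constraints of Lemma~\ref{l:prop-sym} with respect to the full automorphism group
$$G_{R^L}=\langle(25)(36)(47),\ (26)(35)(8\ 10),\ (24)(57)(89)\rangle,$$
requiring $h_{X_{\sigma I}}=h_{X_I}$ for every $\sigma\in G_{R^L}$ and every $I\subseteq\llbracket1,10\rrbracket$. (As in the case of $R^-$, one may apply the symmetries only to the original ten variables, leaving the copy variables asymmetric; this keeps the formulation honest while respecting that copy steps are not themselves $G_{R^L}$-equivariant.) Finally, I would add all elemental inequalities on each of the enlarged variable sets appearing in the copy steps, and hand the resulting linear program to a solver. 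The expected optimum is $\sfrac{359}{53}$.

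The main obstacle will be calibrating the copy lemma applications so that the bound $\sfrac{359}{53}$ is actually reached, rather than something strictly weaker. Since the solver output depends quite sensitively on which sub-tuples one duplicates and over which base, the practical difficulty is the search over candidate copy patterns; each pattern yields a large linear program and the goal is to find one whose dual certificate reproduces (implicitly) the particular non-Shannon-type inequalities from \cite{dougherty2011non} that were used by \cite{baber2016graph}. The symmetries help by drastically cutting the number of free coordinates, so I would iterate on small copy patterns first and only escalate in complexity if the bound fails to match.
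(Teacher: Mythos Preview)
Your high-level template matches the paper's proof exactly: write the guessing constraints and normalizations for $R^L$, enlarge via the copy lemma, add elemental inequalities on each enlarged variable set, impose the symmetry constraints for $\langle(25)(36)(47),(26)(35)(8\ 10),(24)(57)(89)\rangle$, and solve.

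The gap is that the actual content of a proof of this type is the \emph{specific} copy-lemma instantiation, and you have not supplied one. You yourself flag this as ``the main obstacle,'' which means what you have written is a plan, not a proof. Moreover, your suggestion to mirror the $R^-$ pattern (``one single-variable copy, followed by two double copies'') is not what the paper does for $R^L$: the paper uses two independent blocks of copies, the first introducing $(X_4',X_5')$ as a copy of $(X_4,X_5)$, then $X_5''$ as an $(X_1,X_4,X_4')$-copy of $X_5$, then $X_1'''$ as an $(X_4,X_4')$-copy of $X_1$; the second introducing $(X_2'''',X_7'''')$ as a copy of $(X_2,X_7)$ over $X_1,X_3,X_4,X_5,X_6,X_8,X_9,X_{10}$, then $X_1'''''$ as an $(X_7,X_7'''')$-copy of $X_1$ over $X_2,X_2'''',X_3,X_4,X_5,X_6,X_8,X_9,X_{10}$. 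Elemental inequalities are written only within each block (14 variables and 13 variables, respectively), not across them. With these choices the LP optimum is $\le\sfrac{359}{53}$; without them there is nothing to verify.
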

	\begin{proof}
		We construct a linear program as in the previous proof. We use the following constraints.
		\begin{enumerate}
			\item the following applications of the copy lemma:
			\begin{enumerate}
				\item \begin{itemize}
					\item $(X_4',X_5')$ be a copy of $(X_4,X_5)$,
					\item $X_5''$ be a $(X_1,X_4,X_4')$-copy of $X_5$,
					\item and $X_1'''$ be a $(X_4,X_4')$-copy of $X_1$;
				\end{itemize}  
				\item \begin{itemize}
					\item $(X_2'''',X_7'''')$ be a copy of $(X_2,X_7)$ over $X_1,X_3,X_4,X_5,X_6,X_8,X_9,X_{10}$,
					\item and $X_1'''''$ be a $(X_7,X_7'''')$-copy of $X_1$ over $X_2,X_2'''',X_3,X_4,X_5,X_6,X_8,X_9,X_{10}$
				\end{itemize} 
			\end{enumerate}
			\item the elemental inequalities for the following sets of random variables
			\begin{itemize}
				\item $X_1,X_1''',X_2,X_3,X_4,X_4',X_5,\\X_5',X_5'',X_6,X_7,X_8,X_9,X_{10}$
				\item $X_1,X_1''''',X_2,X_2'''',X_3,X_4,X_5,\\X_6,X_7,X_7'''',X_8,X_9,X_{10}$
			\end{itemize}
			\item the symmetry constraints for the permutations $(25)(36)(47)$, $(26)(35)(8\ 10)$ and $(24)(57)(89)$
		\end{enumerate}
		The optimal value of this linear program is $\le\sfrac{359}{53}\approxeq6.773584906$, which proves the claim.
	\end{proof}
	\begin{remark}
		Note that in the linear programs constructed in the proofs above, unlike those in secret sharing, we did not take all the elemental information inequalities for all the combinations of old and new random variables. For example there is no Shannon-type inequality involving both $X_2'$ and $X_4''$ in the first linear program and none involving both $X_4'$ and $X_2''''$ in the second. The reason is that the number of elemental inequalities involving all possible combinations of random variables is enormous. If we included all these constraints in the linear program, the computational complexity of the problem would increase so much that the existing linear program solvers (for our computers) could not handle it. Our choice of the sets of variables for which we write elemental information inequalities follows from the applications of the copy lemma: a copy variable used in order to define another copy variable is put in the same set as the latter.
	\end{remark}
	
	\section{Conclusion}
	In this paper we studied the application of computer-assisted proofs involving non-Shannon-type inequalities. Though each separate ingredient used in our construction was known earlier, the resulting combination proved to be surprisingly efficient.
	
	We improved lower bounds for the information ratio of 8 access structures based on rank-4 8-point not-linearly-representable matroids. We tried to apply the same approach to one more similar access structure (based on the matroid port $\mathcal{V}^*$), however we failed. We believe this `success rate' (8 bounds improved out of 9 access structures investigated) shows that the used method is quite strong and it might be interesting to extend to the other instances of the problem of secret sharing. 
	
	We also improved the upper bound for the single smallest undirected graph, the asymptotic guessing number of which is unknown, namely $R^-$. Not only our bound improves upon the previous one, but also the fraction is simpler (i.e. the denominator is smaller). Note that there is no evidence that the obtained number is the exact guessing number for this graph, a finer analysis may improve our upper bound.
	
	\section{Acknowledgement}
	
	I thank Andrei Romashchenko for his reading, suggestions, feedback and motivating me to finish this article. This work was partly funded by FLITTLA project (ANR-21-CE48-0023).
	
	\printbibliography
	
	\section*{Appendices}
	\addcontentsline{toc}{section}{Appendices}
	\renewcommand{\thesubsection}{\Alph{subsection}}
	
	\subsection{Symmetry Groups of Access Structures}
	To find the symmetry constraints to be written in item~(vi) of the Proposition~\ref{p:linear-program-for-ratio}, we need to understand the symmetries of the access structures under consideration.
	So let us more closely look at the symmetry groups of the access structures we previously described.
	In the rest of this section $G$ denotes the largest permutation group under which the access structure in question is invariant.
	
	Below we find the symmetry groups.	
	
	\begin{itemize}
		\item $\mathcal{V}$: $\langle(24)(35),(23),(45),(67)\rangle$
		\item $\mathcal{A}, \mathcal{A}^*$: $\langle(12)(56),(14)(36),(17)(35)\rangle$
		\item $\mathcal{F}, \mathcal{F}^*$: $\langle(12)(4576),(46)(57)\rangle$
		\item $\mathcal{Q}, \widehat{\mathcal{F}}, \mathcal{Q}^*$: $\langle(12)(47),(12)(56)\rangle$
	\end{itemize}
	
	It is easy to check that these access structures are invariant under these permutations. In the following paragraphs we explain how to find for each access structure its group of symmetries.
	
	{\small
		\paragraph{For $\mathcal{V}$,} the symmetry group $G$ of the access structure (the stabilizer of $0$ in the automorphism group $Aut(V_8)$ of the V\'amos matroid) is generated by the permutations  $(23),(45),(67),(24)(35)$, as can be seen from Figure~\ref{img-vamos}. Indeed, 
		it is easy to verify that the access structure is $G$-invariant. Let us explain why $\mathcal{V}$ has no other symmetries (besides the elements of $G$).
		First notice that the sets $\{0,1\},\{2,3\},\{4,5\},\{6,7\}$ form a block system. This can be seen by looking at which of the five size four circuits they are in.
		Every automorphism that has $0$ as a fixed point, must also have $1$ as a fixed point since $\{0,1\}$ is a block (of size two).
		Further notice that the number of size four circuits these blocks are in: $2$, $3$, $3$ and $2$ respectively. Therefore, since the block $\{0,1\}$ is fixed, the block $\{6,7\}$ must also be fixed. This leaves only $\langle(67)\rangle$ as permutations for $6$ and $7$. 
		There remains only two blocks: $\{2,3\}$ and $\{4,5\}$. If one is fixed, so is the other; if they are not fixed, they must move to each other. In the former case we are left with the permutations $\langle(23),(45)\rangle$ and in the latter $(24)(35)\cdot\langle(23),(45)\rangle$, hence in general case $\langle(23),(45),(24)(35)\rangle$.
		
		\paragraph{For $\mathcal{A}$,} we present the access structure as in the image below:
		
		\begin{tikzpicture}
			\filldraw (0,0) circle (1pt) node[align=center, above] {$1$};
			\filldraw (0,2) circle (1pt) node[align=center, above] {$2$};
			\draw (-3,-1) -- (3,-1) node[align=right, above]{$(3)$};
			\filldraw (-4,-2) circle (1pt) node[align=left, below]{$4$};
			\draw (-1,2) -- (3,-2) node[align=right,below]{$(5)$};
			\draw (-3,-2) -- (1,2) node[align=right,below]{$(6)$};
			\filldraw (4,-2) circle (1pt) node[align=right,below]{$7$};
		\end{tikzpicture}
		
		\smallskip
		
		Any two points with the line according to which they are at the same side (for example $\{1,2,3\}$, because $3$ divides the plane into two half-planes, and $1$ and $2$ are on the same half-plane) is a minimum accepted coalition as well as all four points. In fact, the points are precisely those that appear in the only size four accepted coalition $1,2,4,7$, and the lines are those that do not. Hence the action of $G$ on the block system $\{\{1,2,4,7\},\{3,5,6\}\}$ is trivial (both blocks are fixed as their cardinalities are different). Thus $G\le Sym(\{1,2,4,7\})\cdot Sym(\{3,5,6\})$.
		
		We intuit that the action of $G$ on the set of `points' $\{1,2,4,7\}$ is transitive. The transpositions $(12)$, $(14)$ and $(17)$ generate the group $Sym(\{1,2,3,4\})$. If for each of these transpositions $\tau$ we can find a non-empty subset $S_\tau\subseteq Sym(\{3,5,6\})$ under the multiplication  with which $\mathcal{A}$ is invariant, then we can conclude that $G$ is generated by the elements of three sets $\tau\cdot S_\tau,\ \tau=(12),(14),(17)$. A priori, if the action of $G$ on $\{1,2,4,7\}$ is not transitive, the sets $S_\tau$ are not all non-empty. If at least one of them is empty, we cannot characterize $G$ with the others. However, it is not hard to verify that $S_{(12)}=\{(56)\}$, $S_{(14)}=\{(36)\}$ and $S_{(17)}=\{(35)\}$. This proves our intuition and that $G$ is $\langle(12)(56),(14)(36),(17)(35)\rangle$.
		
		\paragraph{The same argument works for $\mathcal{A}^*$,} we just change the geometric definition of minimum authorized sets to `any two points with the line according to which they are at the same side as well as any point with all three lines' (for example $\{1,3,5,6\}$ because $1$ is a point and $3,5,6$ are the three lines).
		
		\paragraph{For $\mathcal{F}$,} we use the following geometric presentation:
		
		\smallskip
		
		\begin{tikzpicture}
			\draw (-2,0) -- (2,0) node[align=right] {$(1)$};
			\draw (0,2) -- (0,-2) node[align=center, below] {$(2)$};
			\filldraw (-1,1) circle (1pt) node[align=left,above] {$4$};
			\filldraw (1,1) circle (1pt) node[align=left,above] {$5$};
			\filldraw (-1,-1) circle (1pt) node[align=left,above] {$6$};
			\filldraw (1,-1) circle (1pt) node[align=left,above] {$7$};
		\end{tikzpicture}
		
		\smallskip
		
		In fact, $3$ is the only one not appearing in the minimum authorized coalitions of size four, thus it must be fixed under $G$. $1$ and $2$ are the only ones appearing in both of the four-element accepted coalitions. Hence, we have a block system ($\{3\}$, $\{1,2\}$, $\{4,5,6,7\}$) on which the action of $G$ is trivial. Looking at the minimum authorized coalitions of size three, we get the rest of the image. It is easy to verify that the authorized sets are:
		\begin{itemize}
			\item with $3$:
			\begin{itemize}
				\item both lines
				\item two points separated by both lines (for example $\{3,4,7\}$ because $4$ and $7$ are separated by both lines)
			\end{itemize}
			\item without $3$:
			\begin{itemize}
				\item two points with a line not separating them (for example $\{1,4,5\}$ because $4$ and $5$ are on the same side of $1$)
				\item both lines with two points separated by both lines (for example $\{1,2,5,6\}$ because $5$ and $6$ are separated by both lines)
			\end{itemize}
		\end{itemize}
		The symmetries are mirror images and rotations for this image, therefore, they are generated by $(12)(4576)$ and $(46)(57)$.
		
		\paragraph{The same argument works for $\mathcal{F}^*$,} we only change the interpretation of the image to define the minimum authorized coalitions as
		\begin{itemize}
			\item with $3$:
			\begin{itemize}
				\item both lines
				\item a line with two points separated by both lines (for example $\{1,3,4,7\}$ because $4$ and $7$ are separated by both lines and that $1$ is a line)
				\item any three points (for example $\{3,4,5,6\}$, because $3,$, $5$ and $6$ are points)
			\end{itemize}
			\item without $3$: two points with a line not separating them'' (for example $\{1,4,5\}$ because $4$ and $5$ are on the same side of $1$).
		\end{itemize}
		
		\paragraph{For $\mathcal{Q}$, $\widehat{\mathcal{F}}$ and $\mathcal{Q}^*$,} we consider the following block system: $\{1,2\}$, $\{3\}$, $\{4,7\}$ and $\{5,6\}$.
		
		For $\mathcal{Q}$, the number minimum authorized coalition of size four to which a participant belongs reveal why the action of $G$ on the blocks is trivial: $1$ and $2$ each in three minimal accepted coalitions, $3$ in four, $4$ and $7$ two, $5$ and $6$ five.
		
		For $\widehat{\mathcal{F}}$, $3$ is the only one that belongs to four minimal authorized coalitions of size four, $1$ and $2$ each belong to two, $5$ and $6$ five and $4$ and $7$ one.
		
		For $\mathcal{Q}^*$, $3$ is the only one that belongs to a single minimal accepted coalition of size three. $1$ and $2$ are the only ones that are with $3$ in this coalition. $4$ and $7$ are the only ones to that belong to a single minimal authorized coalition of size four that contain both $1$ and $2$.
		
		Hence the symmetry group $G$ is $\langle(12)(56),(12)(47)\rangle$. 
		
		\begin{remark}
			In what follows we discuss computer-assisted proofs involving symmetry constraints.
			In practice, one can make an error while translating the description of a symmetry group in a form suitable for a computer. However, there is a fortunate `sanity check' embedded in this method.
			If there is an error in the symmetry group added as conditions to the linear program (if we use a wrong group $T$ instead of the symmetry group $G$ of our access structure),
			then two cases are possible:
			
			\begin{itemize}
				\item either $T<G$, then we still get a valid lower bound, possibly worse than what we could achieve with the true group of symmetries of this access structure;
				\item or $T\nless G$, then we get an infeasible program (with no solution). Indeed, any element of $T\setminus G$ applied to the equalities in item (iii) (see Proposition~\ref{p:linear-program-for-ratio}, page~\pageref{p:linear-program-for-ratio}) gives a contradiction. Namely, if $A$ is an accepted set but $\sigma\cdot A$ for a $\sigma\in T\nless G$ is not (there is such an $A$ since $\sigma\notin G$), then as $H(S_0|S_A)=0$, the symmetry under $\sigma$ implies $H(S_0|S_{\sigma\cdot A})=0$ too, but we have $H(S_0|S_{\sigma\cdot A})=H(S_0)$ and $H(S_0)=1$ as constraints, contradiction! 
			\end{itemize}
		\end{remark}
		\normalsize
		\subsection{Symmetry Groups of Sight Graphs}
		
		The symmetry group of $R^-$ is generated by two permutations: $\sigma=(18)(2\ 10\ 5\ 9)(3746)$ and $\tau=(25)(36)(47)$. It can be easily checked that these two permutations are indeed automorphisms of $R^-$. Let us show that they generate the full automorphism group $Aut(R^-)$. We can define blocks using the degrees of vertices: degree six vertices $\{1,8\}$, degree five vertices with a single degree six neighbour $\{2,5,9,10\}$ and degree five vertices with two degree six neighbours $\{3,4,6,7\}$. Since the first block is the only one of cardinality two it does not move. We already have a permutation that transpose $1$ and $8$, thus to find the full automorphism group it is sufficient to find all the permutations in $Stab(1)=Stab(8)$. Under the actions of this stabilizer, $\{2,5\}$ and $\{9,10\}$ are blocks that are fixed, since the former neighbours with $1$ and not $8$ and that the latter neighbours with $8$ and not $1$. Using the same argument again, since $\tau\in Stab(1)=Stab(8)$ transposes $2$ and $5$, it is sufficient to fully characterize the stabilizer of $2$ in the stabilizer of $1$, i.e. all the permutations that fix $1,2,5,8$. By the neighbourhood relations of the other six vertices, one sees that the only such permutation is $(37)(46)(9\ 10)$ (the mirror symmetry of left and right sides) but this is $\sigma\tau\sigma^{-1}$. Therefore, $\sigma$ and $\tau$ generate the whole group $Aut(R^-)$.

		The symmetry group of $R^L$ is same as that of $R$, namely $$G=\langle(25)(36)(47),(26)(35)(8\ 10),(24)(57)(89)\rangle.$$ It is clear that these three permutations are automorphisms of $R^L$. Let us show that they generate the whole $Aut(R^L)$. The vertex $1$ is unique since it is the Luthor vertex. This gives us the following block system: $\{1\}$, the other degree six vertices $\{8,9,10\}$, the degree five vertices $\{2,3,4,5,6,7\}$. Since the cardinalities of the blocks do not match they cannot move to each other. We can rotate and take mirror symmetries of Figure~\ref{graph-R-} to get any permutation we want on $8,9,10$. Therefore, to find all the elements of $Aut(G)$, it is sufficient to find all permutations that fix $8,9,10$. By neighbourhood relations it is clear that the only non-trivial such permutation is the first generator in the description of $G$ above. Therefore, $G$ is indeed $Aut(R^L)$.
	}
	\subsection{A Certificate Of \sfrac{1847}{276} Bound}
	The following are the 1920 inequalities with their respective non-zero coefficients derived from the rational solution of the shortened linear program, which has optimal value $\frac{1847}{276}$.
	\tiny

\end{document}